%% file: main.tex
\documentclass[sigconf,authorversion,nonacm]{acmart}
\AtBeginDocument{%
  }

\usepackage{enumitem}
\usepackage{xspace}
\usepackage{listings}
\usepackage{anyfontsize}
\usepackage{algorithm}
\usepackage[noend]{algpseudocode}
\usepackage{varwidth}
\usepackage{multirow}

\newcommand{\name}{{\sf Helium}\xspace}
\algnewcommand{\LineComment}[1]{\State \(\triangleright\) #1}

\DeclareMathOperator{\mypath}{path}
\DeclareMathOperator{\LCApath}{LCApath}
\DeclareMathOperator{\lenOut}{len_{out}}

\definecolor{codegreen}{rgb}{0,0.6,0}
\definecolor{codegray}{rgb}{0.5,0.5,0.5}
\definecolor{codepurple}{rgb}{0.58,0,0.82}
\lstdefinestyle{codestyle}{
    commentstyle=\color{codegreen},
    keywordstyle=\color{magenta},
    numberstyle=\tiny\color{codegray},
    stringstyle=\color{codepurple},
    basicstyle=\ttfamily\footnotesize,
    breakatwhitespace=false,
    breaklines=true,
    captionpos=b,
    keepspaces=true,
    numbers=left,
    numbersep=5pt,
    showspaces=false,
    showstringspaces=false,
    showtabs=false,
    tabsize=2
}

\begin{document}

\title{Efficient LLM Serving for Agentic Workflows: A~Data~Systems~Perspective (Extended)}



\author{Noppanat Wadlom, Junyi Shen, Yao Lu}
\affiliation{%
  \institution{National University of Singapore}
  \country{Singapore}
}
\email{{noppanat, j1shen, luyao}@comp.nus.edu.sg}

\renewcommand{\shortauthors}{Wadlom et al.}

\input{sections/0_abstract}

\begin{CCSXML}
<ccs2012>
   <concept>
       <concept_id>10002951.10002952.10003190.10003192.10003210</concept_id>
       <concept_desc>Information systems~Query optimization</concept_desc>
       <concept_significance>500</concept_significance>
       </concept>
   <concept>
       <concept_id>10010147.10010178.10010219.10010220</concept_id>
       <concept_desc>Computing methodologies~Multi-agent systems</concept_desc>
       <concept_significance>300</concept_significance>
       </concept>
 </ccs2012>
\end{CCSXML}

\ccsdesc[500]{Information systems~Query optimization}
\ccsdesc[300]{Computing methodologies~Multi-agent systems}

\keywords{large language models, agentic workflows, query optimization}

\maketitle

\input{sections/1_introduction}

\input{sections/2_background}
\input{sections/3_overview}
\input{sections/4_optimizer}
\input{sections/5_processor}

\input{sections/6_implementation}
\input{sections/7_evaluation}
\input{sections/8_related_work}
\input{sections/9_limitations}
\input{sections/10_conclusion}



\bibliographystyle{ACM-Reference-Format}
\bibliography{references}

\input{sections/appendix}

\end{document}

%% file: sections/0_abstract.tex
\begin{abstract}
Agentic workflows are composed of sequences of interdependent Large Language Model (LLM) calls, and they have become a dominant workload in modern AI systems. These workflows exhibit extensive redundancy from overlapping prompts and intermediate results due to speculative and parallel exploration. Existing LLM serving systems, such as vLLM, focus on optimizing individual inference calls and overlook cross-call dependencies, leading to significant inefficiencies. This paper rethinks LLM and agent serving from a data systems perspective and introduces \name, a workflow-aware serving framework that models agentic workloads as query plans and treats LLM invocations as first-class operators. \name integrates proactive caching and cache-aware scheduling to maximize reuse across prompts, KV states, and workflows. Through these techniques, \name bridges classic query optimization principles with LLM serving, achieving up to {1.56×} speedup over state-of-the-art agent serving systems on various workloads. Our results demonstrate that end-to-end optimization across workflows is essential for scalable and efficient LLM-based agents.
\end{abstract}

%% file: sections/1_introduction.tex
\section{Introduction}

AI agents are autonomous LLM-based programs that act on a user’s behalf~\cite{lei2024autonomous-agents, xi2025rise-agent, luo2025agent-survey}. They operate through agentic workflows~\cite{singh2024enhancing-workflows, zhang2025aflow, yu2025workflow-survey, sapkota2025agents-vs-agentic} that are goal-driven sequences of steps involving multiple LLM invocations (often using different prompts or tools), orchestrated to solve complex tasks. They have become a dominant workload in modern AI systems~\cite{asgar2025agentic-hetero, xi2025deepsearch-survey, liu2025supporting}.

\begin{figure}[t]
    \centering
    \includegraphics[width=\linewidth]{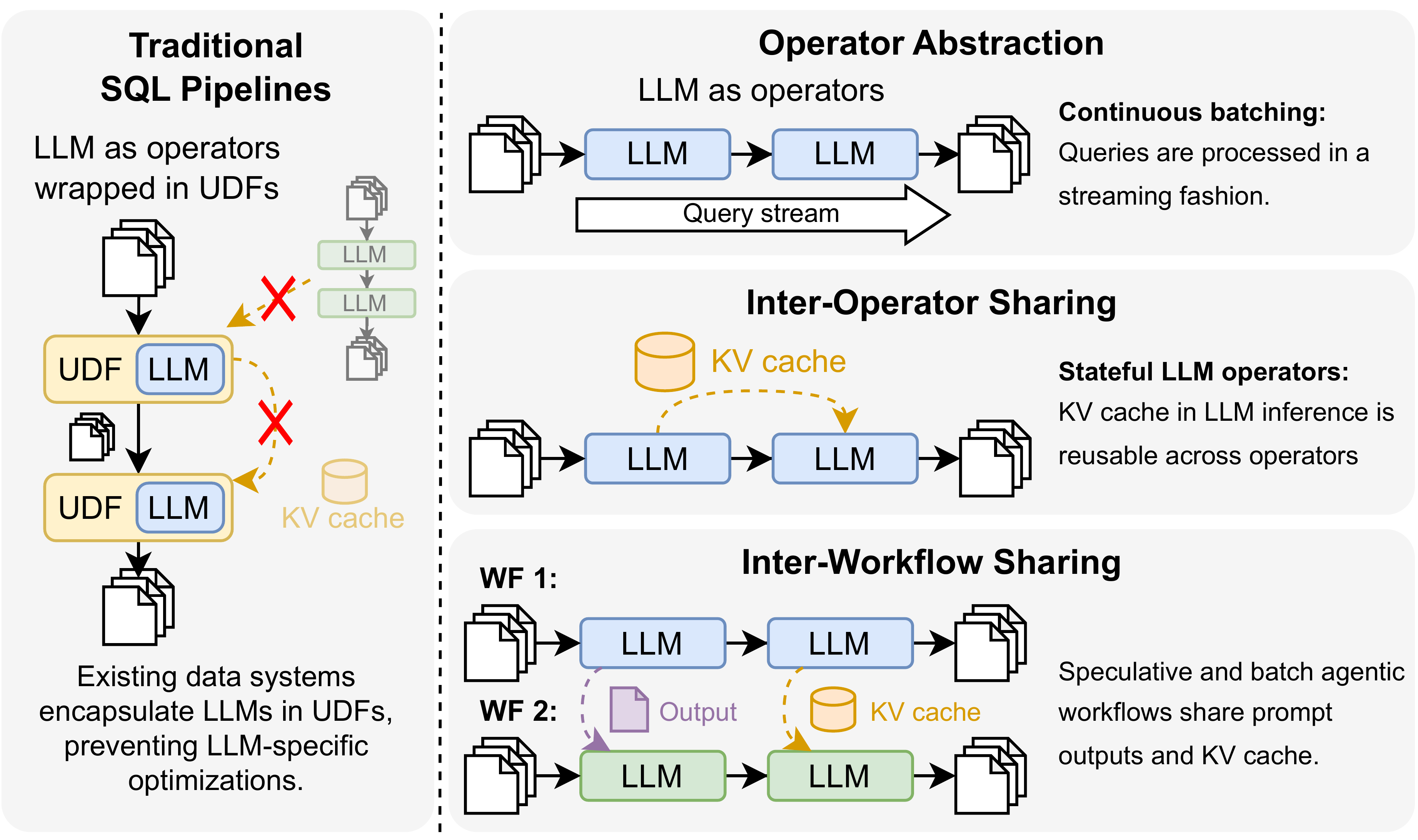}\vspace{-0.1in}
    \caption{Three disparities between traditional SQL pipelines and agentic workflows with LLM as operators.}
    \label{fig:intro}
\end{figure}

An emerging scenario of such workflows is that agents may explore many strategies or subtasks (e.g., trying alternative analyses) in parallel or in sequence to achieve their goal. This agentic speculation \cite{liu2025supporting} is a high-throughput exploration to blend LLM serving with batch-style query processing. In other words, an agentic workflow resembles a batch analytics pipeline where LLM calls function as operators. While powerful, these workflows can issue a large volume of LLM queries, often with overlapping or repeated sub-queries, leading to significant redundancy and inefficiency.

Recent advances in LLM serving systems have largely focused on optimizing individual LLM inference tasks. State-of-the-art serving engines such as vLLM~\cite{vllm} employ techniques like continuous batching~\cite{orca}, which dynamically group incoming requests to better utilize GPUs. These innovations, along with optimized GPU kernels and memory paging strategies~\cite{flashattention, flashinfer, pod-attention, fasttree}, have dramatically increased throughput and reduced latency for standalone LLM queries.
However, existing LLM inference engines operate at the granularity of individual LLM calls, lacking visibility into the broader workflow structure. These solutions are not designed for batch agentic workflows that chain up multiple LLM calls (often with different prompts).  As a result, in complex multi-LLM pipelines, e.g., an agent that spawns several helper agents, each querying an LLM, these per-call optimizations fail to capture cross-call commonalities (e.g., shared sub-prompts or intermediate results). Recent work, such as AgentScope~\cite{agentscope, agentscope-simulation}, supports basic multi-agent pipelines, but they again treat each LLM call or agent as an individual unit, optimizing locally rather than globally. This gap leaves performance optimizations unexplored for batch agentic workflows.

How can we optimize agentic LLM workflows end-to-end? Many of the challenges mirror classic problems in query processing and optimization. An agent workflow can be represented as a directed acyclic graph (DAG) of operations: nodes perform data retrieval or invoke an LLM (i.e., \emph{LLM-as-operators}), and edges represent data or prompt flow between operators. This is analogous to a complex relational query plan or a workflow in a data system. Prior work suggests that decades of database query optimization principles can be applied by modeling an agentic workflow as a query-plan DAG and applying cost-based optimization. The approach of logical and physical optimization, i.e., rewriting plans, choosing operator implementations, and scheduling execution to minimize cost, contributes to the “agent-first” query processing techniques to handle the scale and complexity of LLM-driven workloads~\cite{liu2024agentic}. In essence, this is a familiar data systems problem framed in the context of LLM serving to eliminate redundant work by globally optimizing the workflow, much like a SQL optimizer would do for a complex relational query.  
However, at the same time, important new factors differentiate LLM workflows from traditional SQL pipelines. We identify and summarize the disparities of LLM-as-operators as follows and as illustrated in Figure~\ref{fig:intro}: 

\begin{itemize}[leftmargin=*]
    \item \textbf{Operator abstraction}: The operators in an agentic workflow wrap expensive LLM inference processes. The {continuous batching} technique in LLM serving systems becomes dominant to enable processing of multiple input queries in a streaming fashion without being blocked by other queries in the same batch. Within a single LLM call, the model maintains internal state (e.g., a KV cache of the prompt) and generates output token by token in an iterative fashion. In essence, the operators' batching mechanisms are no longer abstracted by classic relational functions like select or filters.
    \item \textbf{Inter-operator sharing}: As LLMs are inherently stateful, standard data-processing optimizations must be rethought. Across multiple calls, an agent may carry forward conversational context or reuse an earlier answer in a later prompt, along with the KV cache and model's internal state, creating stateful dependencies between operations.   
    \item \textbf{Inter-workflow sharing}:  For speculative and batch agentic workflows, prefix sharing among the queries from an input batch becomes an important technique to reuse partial or even entire agentic sub-chains across queries. This is also the case for workflows across different input batches to query common data sources (e.g., agents that query daily weather or news). 
\end{itemize}

Unfortunately, current data systems are not tailored for these shifts and foresee significant challenges in efficiently processing agentic workflows. LLM operators often are wrapped in user-defined functions (UDFs) by current frameworks (e.g., Spark, Dask \cite{spark, rocklin2015dask}), while an LLM call is treated as a black-box UDF, which prevents introspection or special-case handling for performance. This lack of visibility precludes many optimizations: the system cannot automatically reuse common partial work (like shared prompt prefixes), pipeline LLM calls, or reorder operations based on LLM-specific cost factors. Importantly, continuous batching does not naturally happen beyond each individual UDF, causing blockage and significant performance degradation. Nevertheless, cost-based Volcano-style query optimization~\cite{graefe1994volcano} principles still apply, as operator costs and even semantics are often readily available, but LLM-as-operators introduces a new form of batching abstraction, statefulness, and cost that traditional optimizers were never designed to handle.

To mitigate the paradigm shifts in agentic workflows and close the gap between the continuous batching abstract in LLM serving systems and the classic batching abstract in data systems, we consider a key missing jigsaw to be a new \emph{workflow-aware LLM serving} that applies continuous batching while efficiently manages KV and prompt caches from a holistic, workflow perspective, instead of within the scope of individual UDFs. Specifically, it must simultaneously support: (i) inter-operator sharing, by enabling KV state communication across LLMs; (ii) inter-query and inter-batch sharing, by organizing caches according to prompt prefixes across multiple workflows, as well as (iii) an optimizer to maximize the chance of these sharing. 
While existing LLM serving systems (e.g., vLLM) employ prefix caching to reuse KV states from previously encountered prefixes, these approaches are inherently passive, optimized for online serving environments that cannot anticipate future workloads. In contrast, batch agentic workloads expose structures that can be leveraged for a more proactive caching strategy that exploits workload patterns to minimize redundancy. We bridge the gap between AI/MLSys and databases by adapting established query optimization principles (rather than isolated heuristics) to optimize the complex state dependencies and execution patterns of agentic workflows.

In this paper, we propose \name\footnote{Our source code is available at \url{https://github.com/mlsys-io/helium_demo}.}, a system that rethinks serving agentic workflows in modern data systems. \name introduces a novel proactive KV cache paired with a query optimizer to mitigate the drawbacks in today's passive, opportunistic KV cache sharing. By analyzing the workload during compilation, \name recognizes shared prefixes across operators and workflows, and hence pre-warms the cache to reduce redundant computation. Beyond that, \name augments and leverages a cache-aware, cost-based query optimizer to enable rewriting query plans across the workload, maximizing KV state reuse in batch agentic workloads. Compared with state-of-the-art solutions, our prototype achieves up to a 1.34× speedup on a complex financial analysis workflow, with up to a 1.56× speedup on primitive workflows.

In summary, this paper makes the following contributions:
\begin{itemize}[leftmargin=*]
    \item We present \name, a workflow-aware serving layer that models agentic workflows as query plans with LLM as first-class operators, enabling holistic intra- and inter-query optimization.
    \item We introduce a novel proactive caching strategy that pre-warms KV caches for static prompt prefixes and maintains a global prompt cache to bypass redundant operators.
    \item We design a cost-based, cache-aware scheduling algorithm that leverages a templated radix tree to capture prompt structure and dependencies, maximizing prefix cache reuse across batch agentic workloads.
    \item We implement and evaluate \name, demonstrating significant performance improvements over state-of-the-art systems across diverse workload patterns while preserving exact semantics.
\end{itemize}

%% file: sections/2_background.tex
\begin{figure*}[t]
    \centering
    \includegraphics[width=\linewidth]{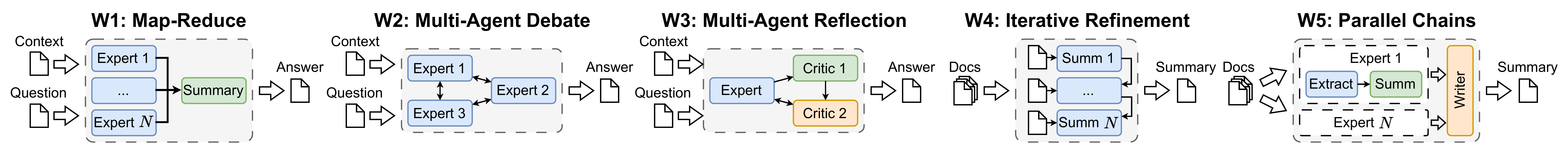}\vspace{-0.05in}
    \caption{Each representative agentic workflow demonstrates a primitive pattern in agent interactions.}
    \label{fig:workflows}
\end{figure*}

\section{Background and Motivation}
\label{sec:background}
The rise of Large Language Models (LLMs) has shifted application development towards complex, multi-step \textit{agentic workflows}~\cite{singh2024enhancing-workflows, asgar2025agentic-hetero, xi2025deepsearch-survey}. Figure~\ref{fig:workflows} demonstrates several representative workflow patterns. These workflows, which resemble traditional data processing DAGs using LLM calls as operators, often involve speculative execution, generating massive redundancy across prompts and results~\cite{wang2023selfconsistency, yao2023tree-of-thoughts, liu2025supporting}. This transforms LLM serving from a simple inference into a problem that requires optimizing computational graphs where each operator is an expensive, stateful LLM call. This section first deconstructs prior work by examining several relevant pillars.
 
\vspace{0.05in}\noindent\textbf{Optimizing LLM Serving for Latency}.
The first pillar of work focuses on optimizing the LLM operator itself in serving systems, which maximizes hardware utilization for streams of independent requests, treating each LLM call as a discrete unit of work. State-of-the-art engines like vLLM~\cite{vllm, vllm-github} use techniques like \textit{PagedAttention} to manage the key-value (KV) cache efficiently and \textit{continuous batching} to dynamically group requests, significantly improving GPU utilization and throughput for LLM serving. To reduce redundant computation, these engines also implement prefix caching, reusing pre-computed KV cache for requests that share common prompt prefixes, often managed with an LRU eviction policy~\cite{vllm, sglang, tensorrt-llm}. Aiming to improve the latency of LLM serving systems, these optimizations are fundamentally \textit{local}, \textit{workflow-agnostic}, and \textit{reactive}, as the serving engine has no visibility into the workload or the broader DAG. This "operator-level myopia" prevents it from addressing cross-call inefficiencies. For instance, it cannot guarantee KV cache reuse for related queries within a workflow if they are separated by unrelated requests, as its optimization is limited to the immediate queries. 

\vspace{0.05in}\noindent\textbf{Orchestrating LLMs as Black-Boxes}.
The next pillar comprises frameworks that orchestrate the logical composition of agentic workflows. These tools provide high-level abstractions for building complex DAGs but treat the LLM operator as a black-box unit. Frameworks like LangGraph \cite{langgraph} simplify building multi-LLM applications by properly managing data and control flows. Traditional data systems like Spark \cite{spark} integrate LLMs as User-Defined Functions (UDFs), which can be \textit{inference-agnostic}. By treating LLM invocations as black-box UDFs, the orchestration layer is blind to the internal mechanics of the LLM operator; critical performance factors, such as the stateful KV cache and the bimodal prefill/decode cost structure~\cite{distserve}, are hidden from the optimizer, preventing the system from making intelligent, cost-based decisions.
 
\vspace{0.05in}\noindent\textbf{Challenges and Our Ideas}.
Prior LLM serving and data systems were originally designed under different contracts (classic vs continuous batching). We observe the following critical challenges:

\begin{itemize}[leftmargin=*]
    \item \emph{KV cache in single operators:} In multi-agent debates \cite{multiagent-debate, liang2024mad}, each turn builds on shared conversational history, yet current systems redundantly reprocess the entire context. An ideal serving layer should instead extend the KV cache, converting costly recomputation into lightweight incremental updates. 
    \item \emph{Passive prefix caching}. Existing prefix caching is passive and opportunistic in reusing KV states, as the incoming queries are unpredictable during online serving. Prior LLM serving systems did not leverage workload patterns in batch agentic workflows.  
    \item \emph{Cost modeling and optimization:} Integrating LLMs in UDFs causes a significant disconnect: the query optimizer is blind to physical execution, while the execution engine (LLM serving systems) is blind to the logical plan.  
\end{itemize}

Rethinking the abstraction of LLM serving in data systems, our key ideas are two-fold. First, instead of employing passive, opportunistic KV cache sharing in prior LLM serving solutions, we propose a holistic, proactive caching mechanism in batch agentic workflows to reuse KV cache across operators and workflows. Next, \name augments and leverages a cost-based query optimizer to pair with the proactive cache, thus enabling query plan rewrite and maximizing cache reuse across operators and workflows. These techniques combined contribute to workflow-aware LLM serving tailored to modern data systems for agentic workflows.

\vspace{0.05in}\noindent\textbf{Scope}. We employ the following scope and simplifying assumptions in our solution: 

\begin{itemize}[leftmargin=0.15in]

\item\emph{Semantic preserving}: Optimized executions produce the same results as naive ones. We avoid approximations (e.g., proxy models) that trade accuracy for speed, focusing on unstructured data analytics rather than SQL-generation tools for structured data.

\item\emph{On-premise deployment}: We study on-prem execution with multiple GPUs, enabling fine-grained scheduling, memory, and resource control, instead of using off-the-shelf cloud APIs. 

\item\emph{Simplifying assumptions}: We limit our scope of agents to calling LLMs and performing local data operations only, without remote API calls. Also, we constrain the agentic workflows used in this work to use the same base LLM. They can be configured with different prompts to simulate different agentic roles. 
\end{itemize}

%% file: sections/3_overview.tex
\section{System Overview}

\name adopts a classic multi-stage query processing architecture~\cite{goetz1993query-evaluation, hellerstein2007db-architecture}, dividing execution into parsing, optimization, and processing phases. Each \emph{agentic workflow} is represented using a procedural language that specifies the individual LLM or agent calls and their dependencies. The system treats each workflow definition as a template that will be evaluated over a batch of input instances.  

\vspace{0.05in}\noindent\textbf{Parsing Agentic Workflow DSL}.
We use a domain-specific language (DSL) to represent batch agentic workflows that share the structure but with different input prompts. Our parser translates them into a DAG. Each operator in the DAG corresponds to a prompt-related action, such as invoking a model, retrieving data, or running a custom transformation. \name's design is inspired by the dataflow representation in TensorFlow~\cite{tensorflow} that first constructs a symbolic graph of operators, using placeholders to mark where actual prompt tokens or data will be fed in at execution time. During execution, each query and the prompts \emph{flow} through the DAG. Unlike TensorFlow, \name allows cross-operator continuous batching such that ready outputs can be forwarded to the next operators without blockage. More details are provided in Section~\ref{sec:implementation}.

\vspace{0.05in}\noindent\textbf{Logical Plan Optimization}.
\name’s query optimizer rewrites the logical DAG to eliminate redundancy and exploit shared computation across the batch workload. It applies rewrite rules to prune and merge operators. Redundant nodes (e.g., identity operators or unused branches) are removed, and identical subgraphs that occur across queries are consolidated. This is akin to common sub-expression elimination in databases, such that redundant subplans are detected and computed only once. 
After structural optimization, the optimizer performs cache substitution: for each operator, the optimizer checks if a matching entry exists in a global prompt cache that maps the inputs of deterministic operators to their outputs. On a cache hit, the corresponding LLM operator is replaced by a lightweight \texttt{CacheFetch} operator. This transformation converts a computational dependency into a simple data retrieval.

\vspace{0.05in}\noindent\textbf{Execution Planning and Processing}. \name then constructs a \emph{templated radix tree} (TRT) over the optimized logical plans to capture the prompt structure and identify commonalities. This TRT serves as the primary input for \name's cache-aware scheduling that uses a cost-based model to assign operators to workers and determine an execution order that balances load and maximizes KV cache reuse for shared prompt prefixes. \name's query processor then executes the plan with our proactive caching mechanism. Specifically, for static prompt prefixes identified by the TRT, proactive caching pre-computes and stores these states in the GPU memory during the first execution; in subsequent batches, workers can directly reuse these tensors to avoid prefill. Meanwhile, a global proactive prompt cache is maintained to store the full outputs of deterministic operators, allowing the system to bypass entire operator executions for repeated inputs. These efforts allow \name to accelerate batch agentic workflows with unchanged outputs.

\begin{figure}[t]
    \centering
    \includegraphics[width=\linewidth]{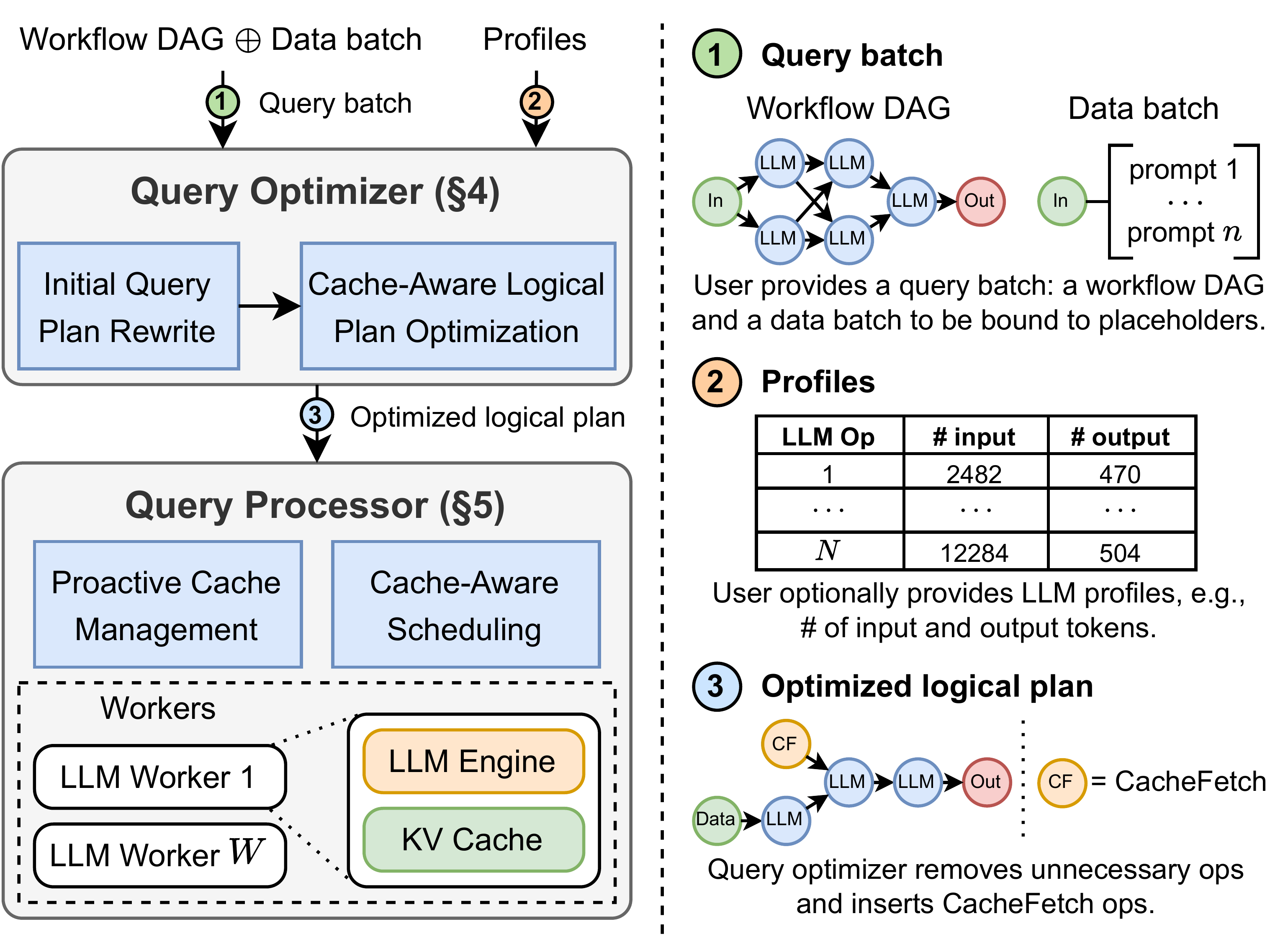}\vspace{-0.05in}
    \caption{Overview of \name's architecture.}
    \label{fig:helium-design}
\end{figure}

%% file: sections/4_optimizer.tex
\section{Query Optimizer Design}\label{sec:optimizer}
The query optimizer identifies sharing opportunities at multiple granularities, from entire sub-workflows to prompt prefixes, and constructs the logical plan for execution, bridging the gap between workflow-agnostic LLM serving engines and inference-agnostic orchestration. The optimization process comprises two stages.

\vspace{0.05in}\noindent\textbf{Initial Plan Pruning}.
User-defined agentic workflows, especially those with speculative execution, often contain structural redundancies and inefficiencies such as dead code or duplicated computations~\cite{yao2023tree-of-thoughts, besta2024graph-of-thoughts, liu2025supporting}. This stage produces a clean graph representation that simplifies the problem space for subsequent optimizations.

\begin{itemize}[leftmargin=*]
\item \emph{Operator Pruning}. Analogous to dead code elimination in compilers, this transformation removes operators that do not contribute to the final output. The optimizer performs a backward traversal from the designated output nodes, identifying and pruning any operator whose output is not consumed on a valid execution path. In multi-agent workflows, this efficiently removes speculative branches that are not rendered eventually.

\item \emph{Common Subgraph Elimination}. Next, the optimizer applies common subgraph elimination (CSE)~\cite{sellis1988optimization}, to identify and merge structurally identical subgraphs that share the same inputs. Subgraphs are hashed based on their topology and input node identifiers to detect duplicates efficiently. This ensures a computation with specific inputs is executed only once, and the prompt prefixes associated with this computation can be uniquely identified. For example, in the Map-Reduce pattern in Figure~\ref{fig:workflows}, multiple agents can be initialized with the same but repeatedly defined context. Without CSE, the preparation of this context incurs redundant computations, and the shared prompt prefixes containing this context cannot be identified.
\end{itemize}

\noindent\textbf{Logical Plan Optimization with Prompt Cache}. Next, \name leverages a global prompt cache to bypass redundant computation at the operator level, as agentic workflows often repeat entire tasks within or across queries. This cache maps the inputs of deterministic operators to their previously computed outputs. The optimizer performs a recursive, bottom-up traversal of the DAG. For each operator, it computes a signature from its type and the values of its materialized inputs (i.e., constants or values resolved from the cache); the signature is used to probe the prompt cache. Upon a cache hit, the operator is marked; during the resolution stage, the optimizer replaces it with a \texttt{CacheFetch} operator. This lightweight operator stores a pointer to the cached result. This transformation fundamentally alters the data flow; a computational dependency is rewired into a simple data retrieval dependency. Consider a \texttt{Summarizer} agent consuming output from an \texttt{Expert} agent. If the \texttt{Expert}'s input was processed previously, \name replaces the operator with a \texttt{CacheFetch}, allowing the \texttt{Summarizer} to retrieve cached output immediately and bypassing the expensive document processing entirely.

To ensure unchanged results and reproducibility, this caching mechanism is restricted to deterministic operators, such as non-LLM operators and LLM operators with greedy sampling (e.g., zero temperature), and uses LRU as the eviction policy. The optimization overhead is negligible compared to LLM inference latency. It involves a linear graph traversal consisting of lightweight CPU-bound tasks like hashing signatures and rewriting graph nodes.

\vspace{0.05in}These transformations produce an optimized \textit{logical} plan to express the \textit{intent} to fetch from a cache or execute in parallel, without binding operations to specific workers or timelines. This allows the query processor to generate physical plans based on runtime resource availability and updated cost models.

%% file: sections/5_processor.tex
\section{Query Processor Design}\label{sec:processor}\label{sec:scheduling}

\begin{figure}[t]
    \centering
    \includegraphics[width=0.8\linewidth]{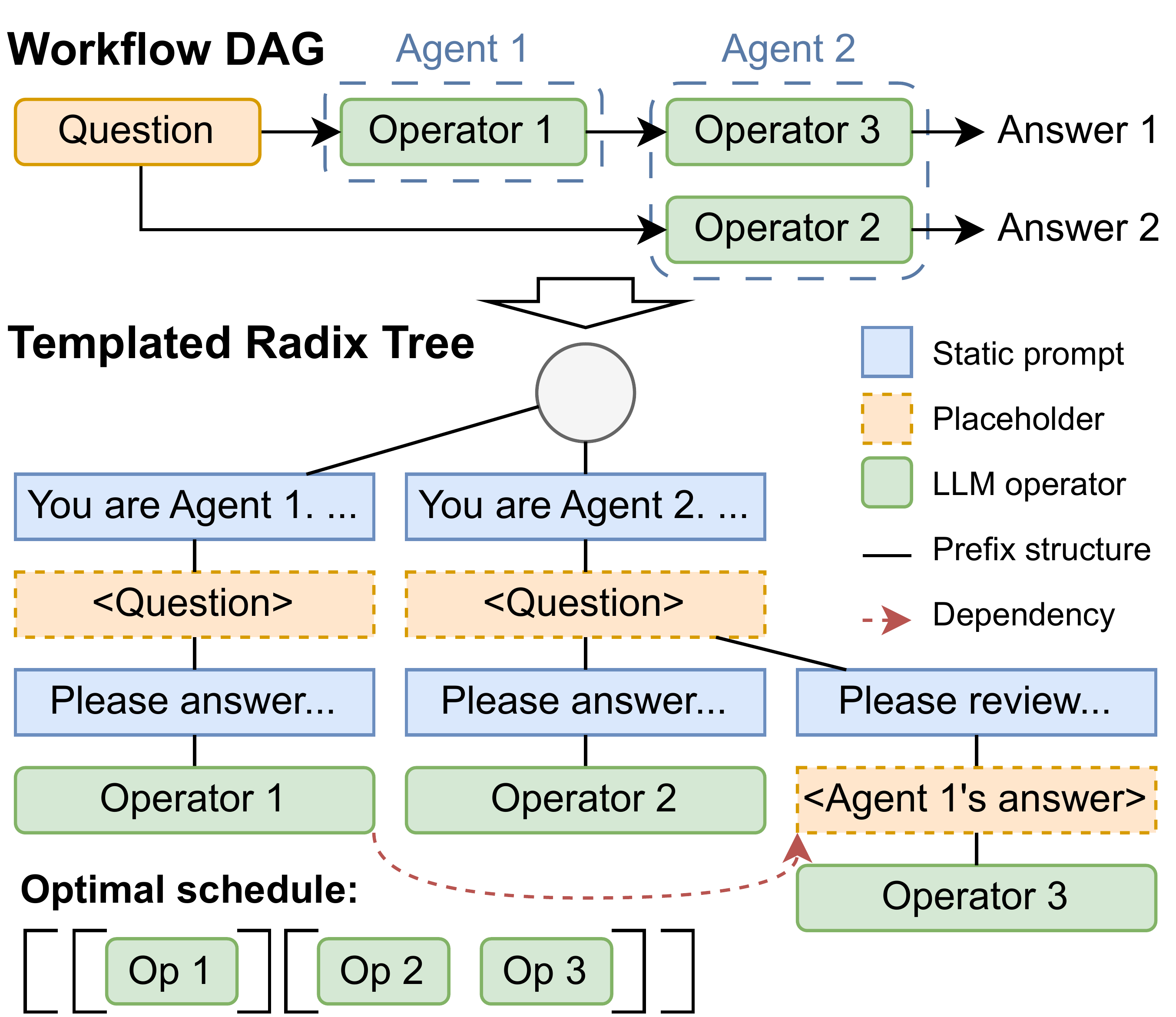}\vspace{-0.05in}
    \caption{A workflow DAG (top) and the corresponding templated radix tree with cache-aware schedule (bottom)}
    \label{fig:trt}
\end{figure}

\vspace{0.05in}\noindent\textbf{Modeling Prompt Structure with a Templated Radix Tree}.  By representing an agentic workflow as a DAG, \name can infer the structure of input prompts among operators and identify opportunities for prefix cache reuse. This is done with a \emph{templated radix tree} (TRT), a novel data structure upon the standard radix tree~\cite{morrison1968patricia}. The TRT represents the prefix structure of both static prompt components and dynamic parts derived from other operators' outputs. It also captures the dependencies among its leaf nodes.

Formally, a TRT $T = (V, E, E')$ consists of a set of nodes $V$, a set of edges $E$ representing the prefix structure, and a set of directed edges $E'$ connecting the leaves $L \subset V$. Let $r$ denote the root of $T$. Each intermediate node $v \in V \setminus (L \cup \{r\})$ is associated with a sequence of tokens and placeholders, denoting token sequences to be filled by other operators' output. Each leaf $l \in L$ is associated with an LLM operator whose input prompt structure is defined by the path from the root $r$ to its parent. The leaves $L$ and dependency edges $E'$ form a DAG $G = (L, E')$, representing the dependencies among the operators. Specifically, an edge $(l_1, l_2) \in E'$ indicates that the operator at $l_2$ depends, directly or indirectly, on the output of the operator at $l_1$. The TRT effectively captures the prefix structure of the operators' input prompts and their dependencies. We describe the algorithm for constructing the TRT in \autoref{sec:trt-construction}.

In modern LLM inference engines, the cache of common prefix tokens across different calls can be \emph{reused} to improve prefill efficiency~\cite{vllm, sglang}. Due to limited GPU memory, the prefix KV cache may be prematurely evicted to make space for newly scheduled calls. Prior work, SGLang~\cite{sglang}, proposed organizing the KV cache as a radix tree and scheduling LLM calls in order of their shared prefix length. While this strategy maximizes prefix cache reuse when all LLM calls are available upfront, it is suboptimal for agentic workflows. Consider a workflow with two agents (the top panel of \autoref{fig:trt}): Agent~1 generates an answer (Operator~\textcircled{1}); Agent~2 answers the same question while providing feedback on Agent~1's answer (Operator~\textcircled{2} and \textcircled{3}). Based on its templated radix tree, an optimal schedule would be \textcircled{1} $\rightarrow$ \textcircled{2} $\rightarrow$ \textcircled{3}, as this enables full reuse of the KV cache for Agent~2's system prompt between Operator~\textcircled{2} and \textcircled{3}. However, the online algorithm might yield a suboptimal schedule like \textcircled{2} $\rightarrow$ \textcircled{1} $\rightarrow$ \textcircled{3}. In this case, the KV cache for Agent~2's prompt might be partially evicted to process Operator~\textcircled{1}, leading to recompute and reduced efficiency for Operator~\textcircled{3}.

To address this, we must capture prefix structures across the entire workflow. Parrot~\cite{parrot} uses Semantic Variables as placeholders in prompt templates to model data dependencies. While it schedules requests to engines holding cached KV states, it encapsulates prompt structures only within individual LLM calls. This restricts the system to a dependency DAG rather than a global prefix hierarchy, forcing it into reactive scheduling that detects sharing only when requests are ready, potentially resulting in the same suboptimal schedule described above. In contrast, our TRT explicitly models both dependencies and the global prefix hierarchy, enabling proactive scheduling that minimizes execution cost.

\vspace{0.05in}\noindent\textbf{Proactive Cache Management}.\label{sec:caching}
Batch agentic workflows are highly redundant within and across batches. Queries in a batch often share the same structure, repeating prompt prefixes and intermediate results. Because agents repeatedly probe similar information, many prompts and outputs remain unchanged across runs. For example, a trading agent summarizing daily company reports sees near-identical inputs, duplicating LLM computation.
\name improves efficiency through a proactive KV caching mechanism. 

Specifically, \name’s query processor uses the TRT, constructed during scheduling, to identify static prompt prefixes that are invariant across batches. During the first execution of a workflow, the query processor precomputes and stores the KV cache for these static token sequences in GPU memory. In subsequent batches, the LLM engines can directly use these precomputed KV tensors, avoiding redundant prefill computations.

\vspace{0.05in}\noindent\textbf{Scheduling Problem Formulation}. To map each operator to the workers, we model our scheduling task as a multi-worker scheduling problem to find an assignment of calls to workers and an execution order that optimizes the \emph{makespan} (i.e., wall-clock latency) of the execution.  Specifically, \name leverages the \emph{token usage} for each LLM call, and the total \emph{token step} of the entire workflow. The former is the total number of tokens it occupies across all its inference steps. A token step serves as the unit of time in our model, where each LLM call consumes a number of token steps proportional to its token usage. This formulation allows us to model the effect of prefix cache sharing by reducing a call's token usage by the number of shared prefix tokens. To account for batching parallel calls for a single inference step, we impose \emph{precedence delay constraints} that define the minimum number of token steps that must pass between a call and another call that depends on its output. This incentivizes the scheduler to intersperse dependent calls with independent ones, thereby increasing the potential batch size.
We consider prefix sharing only in consecutive LLM calls on the same worker. Our cost model uses a call-level TRT, where each leaf corresponds to an individual LLM call rather than an operator.

Let $T = (V, E, E')$ be a TRT with root $r$ and leaf set $L$, where each $l \in L$ corresponds to an LLM call. Let $G = (L, E')$ be the dependency DAG among the calls. Each node $v \in V$ has an associated weight $\omega(v)$, where $\omega(v) = 0$ if $v \in L \cup \{r\}$, and otherwise $\omega(v)$ is the number of tokens in the prompt segment associated with $v$. Let $W$ be the set of LLM workers, and let $M_i$ be the KV cache capacity (in tokens) of worker $i \in W$. We partition the set of leaves $L$ into $|W|$ disjoint subsets $L_1, \ldots, L_{|W|}$ corresponding to the workers. We denote a permutation (schedule) of the calls assigned to worker $i$ as $\sigma_i = (l^i_1, \ldots, l^i_{|L_i|})$, where $l^i_k \in L_i$. The overall schedule is $\sigma = (\sigma_1, \ldots, \sigma_{|W|})$. The cost of each LLM call is modeled by its token usage and precedence delays to capture batching effects.

\vspace{0.05in}\emph{Token usage.} Let $u(i, j)$ be the sequence-dependent token usage of the $j$-th call, $l^i_j$, in worker $i$'s schedule. We decompose $u(i, j)$ into prefill usage $u_p(i, j)$ and decode usage $u_d(i, j)$. The prefill usage is the number of new tokens to be processed, not shared with the previous call $l^i_{j-1}$:
\begin{align*}
    u_p(i, j) = \begin{cases}
        \sum_{v \in \mypath(r, l^i_j)}{\omega(v)}, & \text{if } j = 1 \\
        \sum_{v \in \LCApath(l^i_{j-1}, l^i_j)}{\omega(v)}, & \text{otherwise}
    \end{cases}
\end{align*}
where $\mypath(v_1, v_2)$ is the set of nodes on the path from $v_1$ (exclusive) to $v_2$ (inclusive), and $\LCApath(v_1, v_2)$ is the set of nodes on the path from the lowest common ancestor of $v_1$ and $v_2$ (exclusive) to $v_2$ (inclusive). The decode token usage models the cumulative token count over all generation steps:
\begin{align*}
    u_d(i, j) = \tfrac{1}{2}\lenOut(l^i_j) (\lenOut(l^i_j) + 1)
\end{align*}
where $\lenOut(l)$ is the estimated number of output tokens for the call associated with leaf $l$. The total token usage is then:
\begin{align*}
    u(i, j) = \alpha_i (\lenOut(l^i_j) \times u_p(i, j) + u_d(i, j))
\end{align*}
where $\alpha_i$ is a normalization constant accounting for worker performance differences. If all workers use identical hardware and models, we can set $\alpha_i = 1/M_i$.

\vspace{0.05in}\emph{Precedence delay}.
To model the batching effect, we define a precedence delay $d(i, j)$. Any call that depends on the output of call $l^i_j$ must be scheduled at least $d(i, j)$ token steps after $l^i_j$ completes. This delay is given by:
\begin{align*}
    d(i, j) = \alpha_i M_i \times \lenOut(l^i_j)
\end{align*}

The total token step $T(\sigma)$ for a schedule $\sigma$ is the token step at which the last LLM call finishes, analogous to the makespan in classical scheduling problems.

Putting these together, the scheduling problem is therefore to find a schedule $\sigma$ that minimizes the total token step. Let $b(i, j)$ and $c(i, j)$ be the start and completion token steps, respectively, of call $l^i_j$. The problem is formulated as:
\begin{align*}
    \text{Minimize } T(\sigma) = \max_{\substack{i=1, \ldots, |W| \\ j=1, \ldots, |L_i|}} {c(i, j)} \text{ such that}
\end{align*}
\begin{align*}
    b(i, j) &\geq c(i', j') + d(i', j'), && \forall (l^{i'}_{j'}, l^i_j) \in E' \\
    c(i, j) &= b(i, j) + u(i, j), && i \in \{1, \ldots, |W|\}, j \in \{1, \ldots, |L_i|\}
\end{align*}
Here, the first constraint enforces dependency requirements, ensuring a call starts only after its predecessors are completed and the precedence delay has passed. The second constraint defines the completion step of each call.
The above optimization problem is NP-hard. This can be shown by a reduction from the parallel machine scheduling problem for makespan minimization, which is a well-known NP-hard problem~\cite{machine-scheduling}.

\vspace{0.05in}\noindent\textbf{Solver by Cache-Aware Scheduling}.\label{sec:scheduling-algorithm}
We propose a cost-based, cache-aware greedy algorithm to guide operator scheduling. Our approach is inspired by SGLang's DFS scheduling. However, a direct application at the LLM call level is impractical due to: (1) runtime dynamics, such as unknown output lengths and complex batching behaviors in LLM engines, make the cost model inaccurate and can lead to costly pipeline stalls; and (2) the scheduling complexity would scale with the batch size, which can be unbounded. Our algorithm instead works on an operator-level TRT derived from the logical plan; the complexity thus depends on the workflow structure rather than the batch size. It takes the optimized DAG and offline-profiled operator statistics (e.g., average output token counts) as input and produces a \emph{soft schedule}. This schedule consists of nested sequences of LLM operators for each worker, allowing reordering at runtime to adapt to system dynamics.

\begin{algorithm}[t]
    \caption{\name's Cache-Aware Scheduling}
    \label{alg:scheduling-concise}
    \begin{algorithmic}[1]
        \Function{Schedule}{workflow\_dag}
            \State partitioned\_dag $\gets$ \Call{PartitionWorkflow}{workflow\_dag}\label{alg:line:partition}
            \State tree $\gets$ \Call{BuildSchedulingTree}{partitioned\_dag}
            \While{\textbf{not} \Call{Recurse}{tree, tree.root, false}}
                \State \Call{Recurse}{tree, tree.root, true} \Comment{Force progress if stuck}
            \EndWhile
            \State \textbf{return} tree.\Call{GetFinalSchedule}{\null}
        \EndFunction
        \Statex
        \Function{Recurse}{tree, node, force}
            \If{node.is\_leaf}
                \If{tree.\Call{CanSchedule}{node, force}}
                    \State tree.\Call{ScheduleNode}{node}
                \EndIf
            \Else
                \For{\textbf{each} child \textbf{in} \Call{SelectChildren}{node, force}}
                    \If{\Call{Recurse}{tree, child, force}}
                        \State node.\Call{RemoveChild}{child}
                    \EndIf
                \State tree.\Call{UpdateState}{node.children}
                \State force $\gets$ false \Comment{Only force first child}
                \EndFor
            \EndIf
            \State \textbf{return} node.\Call{IsEmpty}{\null}
        \EndFunction
    \end{algorithmic}
\end{algorithm}

Algorithm~\ref{alg:scheduling-concise} partitions workflow operators across LLM workers, replicating operators per assignment to balance load and shrink the search space. From the partitioned DAG, we build the TRT and a mirrored scheduling tree to track dependencies and states. A DFS-style recursion chooses the next child via a critical-path heuristic \cite{critical-path}, prioritizing the subtree with the greatest aggregated dependency depth and earliest schedulable token step. At leaves, operators are scheduled when data and precedence delays allow; if blocked, we force the operator with the earliest start to ensure progress. The query processor then dispatches schedules; workers execute best-effort, buffering and issuing ready LLM calls to saturate GPUs while preserving cache-friendly ordering.

It is notable that the nested sequence structure is crucial for maximizing cache reuse across different levels of prompt sharing. For instance, in \autoref{fig:trt}, while Operators \textcircled{2} and \textcircled{3} share a static system prompt, the subsequent question prompt is unique to each query in a batch. A simple operator-by-operator schedule would thrash the cache by alternating between different queries' question prompts. Our algorithm mitigates this by grouping operators that share static prefixes into inner sequences. This structure allows workers to process LLM calls query-by-query \emph{within} an inner sequence (to reuse per-query prefix) and operator-by-operator \emph{across} inner sequences (to reuse static prefix). An intermediate buffer is used to collect operators, which are released as a complete inner sequence to the final schedule once all operators sharing the same static prompt have been emitted.
Our algorithm has a time complexity of $O(|V_{int}|\cdot c_{max}^3 + |E'| \cdot d_{max})$, where $|V_{int}|$ is the number of internal TRT nodes, $c_{max}$ is the maximum branching factor, $|E'|$ is the number of dependency edges, and $d_{max}$ is the maximum TRT depth. The proof is detailed in \autoref{sec:complexity-analysis}.

%% file: sections/6_implementation.tex
\section{Implementation}\label{sec:implementation}

\vspace{0.05in}\noindent\textbf{Agentic Workflows DSL}.
We implement the Python-based DSL that, under a lazy dataflow model, constructs a \emph{symbolic} DAG of primitive operators (listed in \autoref{sec:appendix-dsl}); operator calls record nodes and edges instead of executing. \autoref{lst:example} shows a minimal example. Each call to an \texttt{ops} primitive creates a node with its operator kind and arguments, and passing symbolic handles as arguments links nodes via dependency edges. Inputs are declared explicitly with \texttt{ops.placeholder()}, which creates a named input node that can be referenced by downstream operators (e.g., as a message argument in \autoref{lst:example}). Users author workflows by composing operators and supplying the terminal nodes to \texttt{graphs.from\_ops()}; this constructor traverses dependencies from the terminals to materialize the dependency graph. Therefore, compilation binds inputs to the graph without executing the workflow. Specifically, \texttt{compile()} binds placeholders \emph{by name} to a concrete input batch while preserving the graph’s symbolic structure, yielding a compiled graph that is ready for optimization and scheduling. Finally, \texttt{invoke()} submits the compiled graph to the runtime, which applies graph rewriting, generates a cache-aware execution plan, and dispatches work to the assigned workers (Sections~\ref{sec:optimizer} and~\ref{sec:processor}).

\lstset{style=codestyle, language=Python, caption={Example usage of \name's DSL}, label={lst:example}, deletekeywords=[2]{format,compile}\\}
\begin{figure}[t]
\begin{center}    
\begin{lstlisting}
from helium import graphs, helium, ops
# Define the agentic workflow
q = ops.placeholder("q")
answer = ops.llm([ops.Msg("user", q)])
revise_prompt = ops.fmt("Revise answer...", q, answer)
fin_answer = ops.llm([ops.Msg("user", revise_prompt)])
# Build and compile the DAG
graph = graphs.build([fin_answer]).compile(q=["How many inches is 1 meter?"])
# Execute the DAG with Helium runtime
print(helium.invoke(graph))
\end{lstlisting}
\end{center}
\end{figure}

\vspace{0.05in}\noindent\textbf{LLM Engine Integration}.
\name is built on vLLM v0.16.0~\cite{vllm-github}. Each worker manages a dedicated vLLM engine instance, each running in a separate process and communicating with the worker via IPC message queues. To implement our proactive caching strategy, we augment vLLM to support pinning the KV caches of precomputed prefixes. During a precomputation phase, the \name query processor dispatches special requests to the vLLM engine. The engine prefills these requests to populate the KV cache, retaining it in GPU memory for reuse in subsequent batches. While these precomputed caches are prioritized, we mitigate memory pressure using vLLM's native block-level eviction (LRU and longest-prefix-matching). This retains frequently accessed prefixes under contention; evicted prefixes are simply recomputed in the next batch.

\vspace{0.05in}\noindent\textbf{Job Profiling}.
\name requires statistics of LLM operators, such as their average output token counts, to make effective scheduling decisions. \name provides APIs for users to profile their agentic workflows and submit these statistics along with their jobs. For our experiments, we profile all benchmark workflows offline using a different query set. In practice, the profiling overhead is minimal compared to the execution cost with a small sample of queries. This one-time cost can be amortized across workflows later on.

%% file: sections/7_evaluation.tex
\section{Evaluation}

We evaluate \name to assess its performance and efficiency in orchestrating real-world agentic workflows. Our experiments are guided by the following targets:

\begin{enumerate}[label=RQ{\arabic*}]
    \item \label{enum:rq1} How does \name perform on microbenchmark agentic workflows that exhibit primitive patterns?
    \item \label{enum:rq2} How does \name's end-to-end performance compare to state-of-the-art solutions on complex agentic workflows?
    \item \label{enum:rq3} How do \name's key components contribute to the overall system performance? We conduct an ablation study. 
    \item \label{enum:rq4} How does \name perform under different configurations and workload constraints? We conduct a sensitivity study.
\end{enumerate}

In the following subsections, we first investigate \ref{enum:rq1} using five primitive agentic workflows illustrated in \autoref{fig:workflows}. We then evaluate our solution on more sophisticated workflows that combine multiple primitive patterns to demonstrate \ref{enum:rq2}. This is followed by ablation and sensitivity studies to cover \ref{enum:rq3} and \ref{enum:rq4}.

\begin{figure*}[t]
    \centering
    \includegraphics[width=\linewidth]{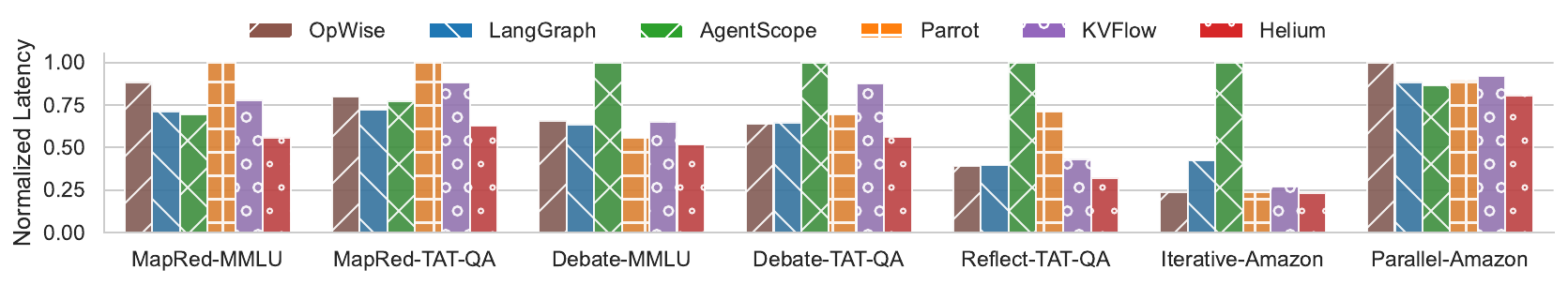}\vspace{-0.15in}
    \caption{Normalized end-to-end latency of \name and baselines, excluding \emph{vLLM}, across representative workflows and datasets with Qwen3-8B. Values are normalized within each workload so that 1.0 equals the slowest system (lower is better).}
    \label{fig:single-workflow-perf}
\end{figure*}

\subsection{Microbenchmarks}
\label{sec:microbenchmarks}
\vspace{0.05in}\noindent\textbf{Primitive Workflows}.
We consider five primitive agentic workflows illustrated in \autoref{fig:workflows}.  \autoref{tab:single-workflow} illustrates more details on the configurations. 
\begin{itemize}[leftmargin=*]
    \item Map-Reduce (\emph{MapRed}): This workflow consists of multiple expert agents that concurrently process input contexts and questions, followed by a summarizer agent that aggregates the experts' outputs to produce the final answer.
    \item Multi-Agent Debate (\emph{Debate}): Following \cite{multiagent-debate}, this workflow involves multiple agents debating over a given context and question before arriving at a final answer.
    \item Multi-Agent Reflection (\emph{Reflect}): This workflow uses an expert agent to draft an answer, followed by multiple critic agents that provide feedback to refine the initial answer.
    \item Iterative Refinement (\emph{Iterative}) \cite{iterative-sum}. Input documents are divided into smaller chunks, and a summarizer agent processes each chunk iteratively, refining the previous summary before producing the final result.
    \item Parallel Chains (\emph{Parallel}). The workflow consists of multiple expert agents with distinct roles that extract insights from separate input chunks (e.g., customer reviews), followed by a writer agent that aggregates the insights into a coherent report.
\end{itemize}
 We use datasets from MMLU \cite{mmlu}, TAT-QA \cite{tatqa}, and Amazon Reviews \cite{amazon-reviews}. For \emph{MapRed} and \emph{Debate}, we sample 200 questions from MMLU and 100 contexts from TAT-QA (each with 6 questions). For \emph{Reflect}, we use 200 contexts from TAT-QA (each with 6 questions). For \emph{Iterative} and \emph{Parallel}, we sample 200 and 100 items from Amazon Reviews, respectively, each with 60 reviews divided into 6 chunks. Since we evaluate performance on a single batch, only proactive KV caching and cache-aware scheduling are enabled.

\begin{table}[t]
    \footnotesize
    \centering
    \begin{tabular}{p{0.14\columnwidth}p{0.20\columnwidth}p{0.53\columnwidth}}
        \toprule
        \textbf{Workflow} & \textbf{Datasets} & \textbf{Configuration} \\
        \midrule
        MapRed & MMLU, TAT-QA & 14 experts for MMLU, 7 experts for TAT-QA with 7 distinct roles \\
        Debate & MMLU, TAT-QA & 3 agents with distinct roles, 2 rounds of debate \\
        Reflect & TAT-QA & 1 expert, 2 critics \\
        Iterative & Amazon & 6 review chunks, 10 reviews each\\
        Parallel & Amazon & 7 experts, each processing 6 review chunks \\
        \bottomrule
    \end{tabular}
    \caption{Configurations for primitive workflows.}
    \label{tab:single-workflow}
\end{table}

\vspace{0.05in}\noindent\textbf{Baselines}. We compare \name against baselines and state-of-the-art LLM serving systems and agentic workflow orchestration:
\begin{itemize}[leftmargin=*]
    \item \emph{vLLM}~\cite{vllm}: a high-throughput LLM inference engine. This baseline implements agentic workflows naively by executing each query's operators \emph{sequentially} on an unmodified vLLM backend. This setup is the same as that in \name to ensure fairness, serving as a reference point for performance in the absence of workflow orchestration.
    \item \emph{OpWise}: executes the workflow DAG \emph{operator by operator} across the batch. We implement this baseline to simulate the execution model of classical batch analytics systems (e.g., Spark, Dask~\cite{spark, rocklin2015dask}). Unlike \emph{vLLM}’s query-wise execution, \emph{OpWise} traverses the DAG in topological order and executes the operator for each query before moving to the next, improving data parallelism.
    \item \emph{LangGraph}~\cite{langgraph}: a graph-based orchestration framework for stateful agents. We map our workflows directly to its graph abstraction and execute on the full batch using LangChain’s Runnable interface for asynchronous batch execution~\cite{langchain}.
    \item \emph{AgentScope}~\cite{agentscope}: a multi-agent platform with an \emph{actor-based distributed} execution mechanism that parallelizes agent runs and exchanges messages among decentralized agent servers~\cite{agentscope-simulation}. We implement the workflows in AgentScope v0.1.6, passing the entire data batch as the initial inputs.
    \item \emph{Parrot}~\cite{parrot}: an LLM service system that exposes application-level knowledge via \emph{Semantic Variables} and performs \emph{dataflow analysis} across requests. For each workflow, we submit all requests for all agents and batch items up front, allowing Parrot to optimize over the whole workload.
    \item \emph{KVFlow}~\cite{kvflow}: a KV cache management framework that abstracts workflows as \emph{Agent Step Graphs} to prevent premature cache eviction and enable efficient prefetching. We employ the paper's SGLang-based implementation as the inference engine and LangGraph for workflow orchestration. Following the paper's methodology, we pre-populate the KV cache with static prompts before each experimental trial.
\end{itemize}

\vspace{0.05in}\noindent\textbf{Models and Testbed}. Our evaluation employs the Qwen3-8B and Qwen3-14B models \cite{qwen3}. All experiments are conducted on a machine equipped with an AMD EPYC 9554 64-Core Processor and two 94GB NVIDIA H100 NVL GPUs. Apart from \emph{KVFlow}, we use vLLM v0.16.0~\cite{vllm}, with automatic prefix caching and chunked prefill enabled, as the LLM inference engine throughout the experiments. Each model instance is deployed on a single GPU by spawning a separate engine instance. 

Since the baseline frameworks do not natively support integration with the inference engines, we deploy a standalone LLM inference server on the same machine and interface with the baselines through the OpenAI API. For multi-engine experiments, because the baselines, except for \emph{Parrot}, lack built-in request routing, we implement a lightweight load balancer that routes requests across worker engines based on request queue length. For reproducibility, we employ greedy sampling for all LLM calls across all experiments.

\vspace{0.05in}\noindent\textbf{Evaluation metrics}. We evaluate all frameworks using \textit{end-to-end latency} (total wall-clock time for query batch preparation and execution). For \name, this includes optimization, scheduling, and processing. For \emph{Parrot}, it includes the overhead of registering Semantic Variables and semantic functions, while for other baselines, the latency reflects only the workflow execution time. Since throughput is inversely proportional to latency for a fixed batch size, we report only latency to simplify comparison.

\begin{table}[t]
    \footnotesize
    \centering
    \begin{tabular}{lccc}
    \toprule
    \multirow{2}{*}{\textbf{System}} & \multicolumn{3}{c}{\textbf{Relative Latency (× vs. \name)}} \\
    \cmidrule(lr){2-4}
     & \textbf{Average} & \textbf{Min} & \textbf{Max} \\
    \midrule
    vLLM       & 66.27 & 38.18 & 100.92 \\
    OpWise     & 1.25 & 1.02 & 1.58 \\
    LangGraph  & 1.28 & 1.09 & 1.83 \\
    AgentScope & 2.10 & 1.07 & 4.32 \\
    Parrot     & 1.44 & 1.02 & 2.21 \\
    KVFlow     & 1.32 & 1.14 & 1.56 \\ \hline
    \name      & \textbf{1.00} & \textbf{1.00} & \textbf{1.00} \\
    \bottomrule
    \end{tabular}
    \caption{Relative latency of each system normalized to \name across representative workflows (lower is better).}
    \label{tab:single-workflow-summary}
\end{table}

\begin{figure*}[t]
    \centering
    \includegraphics[width=\linewidth]{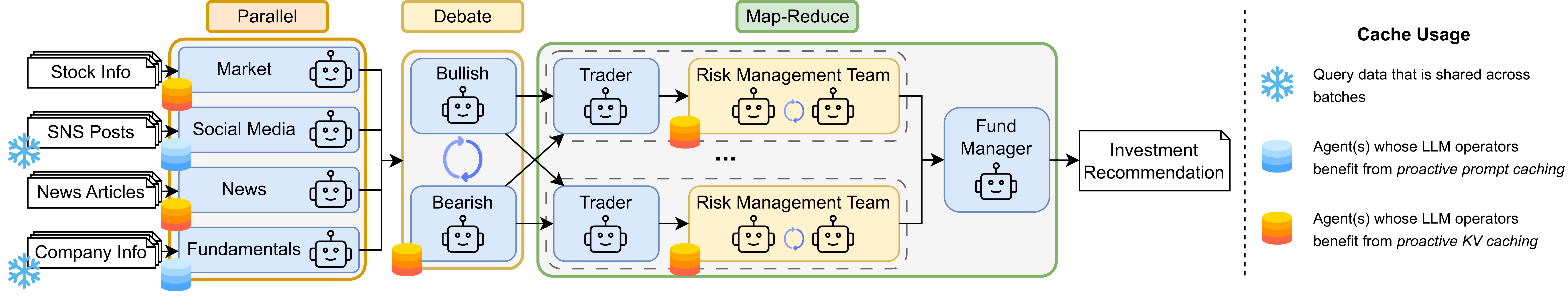}\vspace{-0.15in}
    \caption{The \emph{Trading} workflow used for end-to-end evaluation, combining the \emph{Parallel}, \emph{Debate}, and \emph{Map-Reduce} patterns. Agent annotations indicate opportunities for proactive KV or prompt caching, applied to prefixes over 200 tokens.}
    \label{fig:composite-workflow}
\end{figure*}

\vspace{0.05in}\noindent\textbf{Comparisons}. The results are shown in \autoref{fig:single-workflow-perf} and summarized in \autoref{tab:single-workflow-summary}, demonstrating \name's robust performance across all representative workflows. As expected, \name achieves a speedup of up to 100.92× over the naive \emph{vLLM} baseline, attributed to the baseline's execution model, which processes requests sequentially and precludes the throughput gains from batch computation. \name is up to 1.58× faster than \emph{OpWise}, whose operator-by-operator execution model becomes a bottleneck in highly parallel workflows (\emph{MapRed}, \emph{Parallel}) and is inefficient for workloads with cache patterns that favor a query-by-query execution order. \name outperforms \emph{LangGraph} by up to 1.83× on workflows with high operator parallelism or long dependency chains (\emph{MapRed}, \emph{Iterative}), exposing the limitations of generic graph execution. \name is up to 4.32× faster than \emph{AgentScope}, whose agent-level parallelism is ill-suited for workflows where a few agents are invoked repeatedly (\emph{Debate}, \emph{Reflect}, \emph{Iterative}). Furthermore, \name is up to 2.21× faster than \emph{Parrot}. Although \emph{Parrot} is prefix-aware, its scheduling heuristics cause severe load imbalance on workflows with specific prompt patterns (e.g., \emph{MapRed}, \emph{Debate} with TAT-QA, and \emph{Reflect}). Finally, \name outperforms \emph{KVFlow} by up to 1.56×. While \emph{KVFlow} leverages static prompt precomputation, advanced cache eviction, and hierarchical prefetching, \name's proactive KV caching and cache-aware scheduling still offer a significant performance advantage, especially on high prefix sharing scenarios like \emph{MapRed} and \emph{Debate} with TAT-QA. \name outperforms the baselines in all cases.

\subsection{End-to-End Benchmark}
\label{sec:end-to-end}

\vspace{0.05in}\noindent\textbf{Setups}. To demonstrate \name's performance on a realistic task and answer \ref{enum:rq2}, we construct a complex agentic workflow named \emph{Trading}, combining multiple primitive patterns from the microbenchmark. The workflow aims for investment recommendation and combines \emph{MapRed}, \emph{Debate}, and \emph{Parallel}, mirroring the workflow of a financial trading application proposed in prior work~\cite{tradingagents}. Evaluating this composite workflow allows us to assess \name's ability to orchestrate heterogeneous structures and exploit diverse reuse opportunities within a single workload, which isolated primitive patterns cannot capture. Concretely, the \emph{Trading} workflow has three stages: \emph{analyst}, \emph{research}, and \emph{decision}. The analyst stage follows the \emph{Parallel} pattern, with four agents (market, social media, news, and fundamentals) that analyze and summarize different documents; i.e., the market analyst processes stock price history, while the fundamentals analyst reviews company financial statements. The research stage uses the \emph{Debate} pattern, where two researcher agents debate the analysts' findings, mediated by a manager. The final decision stage resembles the \emph{MapRed} pattern but with nested complexity. It branches into eight chains, each containing a trader agent with a specific behavior (e.g., value trader, news trader \cite{harris2002trade-behavior}). The trader makes an initial decision, which is then evaluated by three risk management agents in a multi-turn debate. A fund manager agent aggregates the outputs from all chains to produce the final investment recommendation. In total, the workflow consists of 19 agents, broken down into 88 LLM operators.

\begin{figure}[t]
    \footnotesize
    \centering
    \includegraphics[width=\linewidth]{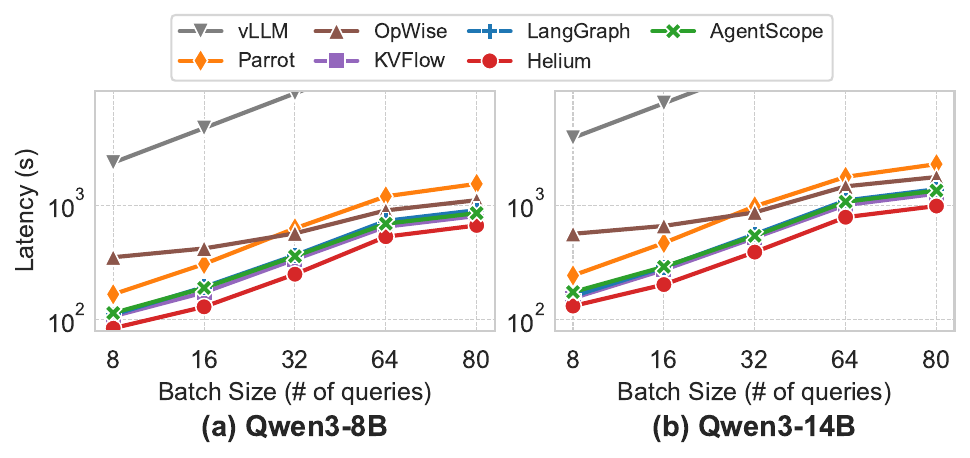}\vspace{-0.15in}
    \caption{End-to-end latency of \name and baselines on the \emph{Trading} workflow across batch sizes using (a) Qwen3-8B and (b) Qwen3-14B.}
    \label{fig:end-to-end-perf}
\end{figure}

We use the same models and experiment environment as the microbenchmark. Inspired by the task in \cite{tradingagents}, we created a financial dataset using data from a few finance data sources \cite{finnhub,yfin,reddit-finance}. We collected data for 100 stocks over two consecutive days. The first day's data is used to warm up the system, while the second day's data is used for evaluation. The dataset consists of multiple components, such as the company profiles, stock price histories, and related social media posts, each fed to the corresponding analyst agents. This setup simulates a daily trading scenario where some data, such as company fundamentals, is relatively static across batches. The workflow, with its branching and repeated sub-tasks, simulates a complex and computationally intensive workload.

\vspace{0.05in}\noindent\textbf{Comparisons}. The end-to-end latency results are summarized in \autoref{fig:end-to-end-perf}. \name achieves a significant performance improvement over the naive \emph{vLLM} baseline, reducing latency by up to 39.50×. This substantial gap highlights the critical need for a workflow-aware orchestration layer that can exploit parallelism. Compared with \emph{OpWise}, which models the workflow as a DAG but executes it operator-by-operator, \name still achieves up to a 4.25× speedup. \emph{OpWise}'s scheduling model is limited; it cannot parallelize independent LLM requests and results in poor prefix cache utilization. \name achieves speedups of up to 1.49× over \emph{LangGraph} and up to 1.46× over \emph{AgentScope}. Despite the fact that our load balancer allows these frameworks to achieve a high degree of parallelism, they are unaware of redundant requests or shared prompt prefixes. \name's proactive caching and cache-aware scheduling allow it to exploit these redundancies. Notably, \name achieves speedups of up to 2.51× compared to \emph{Parrot}. Although \emph{Parrot} also performs workflow analysis, it is not designed for batch agentic workflows and makes suboptimal scheduling decisions. For instance, its heuristic of dispatching requests to engines based on cached prefixes leads to severe load imbalance and a significant performance penalty, highlighting the importance of \name's cache-aware scheduling. Finally, \name outperforms \emph{KVFlow} by up to 1.34×. While \emph{KVFlow}'s workflow-aware eviction and hierarchical prefetching improve efficiency, it lacks the global optimizations employed by \name, resulting in the observed performance gap. More discussion on prefix cache utilization can be found in \autoref{sec:prefix-cache-util}.

\subsection{Ablation Study}
\label{sec:ablation}

To validate the individual contributions of \name's key components (\ref{enum:rq3}), we perform an ablation study on the \emph{Trading} workflow with a batch size of 16 using Qwen3-8B. We evaluate four variants of \name: one without proactive KV caching (w/o KV), one without initial plan pruning (w/o PP), one without prompt caching (w/o PC), and one without cache-aware scheduling (w/o CAS). The results, summarized in Table~\ref{tab:ablation}, show the performance delta relative to the full system. We observe that disabling plan pruning causes the largest performance drop (23.35\%). Although the \emph{Trading} workflow has no unused operators, it contains redundant operators that are otherwise removed by common subexpression elimination (CSE). Without pruning, these redundant operators persist, obscuring prefix identification and leading to a sub-optimal schedule. Disabling cache-aware scheduling causes the second-largest drop (17.66\%), emphasizing the value of intelligent task ordering for cache reuse. Removing prompt caching increases latency by 13.56\%, showing that avoiding redundant operator runs is key to efficiency. Proactive KV caching has a smaller but still meaningful effect (3.55\% delta), confirming that reusing precomputed KV states for static prefixes reduces overhead. Overall, these results demonstrate that \name's gains stem from the synergy of its caching mechanisms, query optimizer, and cache-aware scheduling.

\begin{table}[t]
    \footnotesize
    \begin{tabular}{llcc}
    \toprule
    \textbf{System} & \textbf{Configurations} & \textbf{Latency (s)} & \textbf{Perf. delta (\%)} \\ \midrule
    \name & Full & 130.14 & 0.00 \\ \cline{2-4}
    & w/o KV & 134.76 & -3.55 \\
    & w/o PP & 160.53 & -23.35 \\
    & w/o PC & 147.79 & -13.56 \\
    & w/o CAS & 153.12 & -17.66 \\ \bottomrule
    \end{tabular}
    \caption{Ablation study comparing \name's performance without proactive KV caching (KV), plan pruning (PP), prompt caching (PC), and cache-aware scheduling (CAS).}
    \label{tab:ablation}
\end{table}

\begin{table}[t]
    \footnotesize
    \begin{tabular}{cll}
    \toprule
    \textbf{Scheduling method} & \multicolumn{1}{c}{\textbf{Latency (s)}} & \multicolumn{1}{c}{\textbf{Cache hit rate (\%)}} \\ \midrule
    QueryWise & 658.54 (4.40×) & 42.3 (-25.1\%) \\
    OpWise & 419.03 (2.80×) & 40.8 (-27.8\%) \\
    Random & 193.54 (1.29×) & 37.1 (-34.4\%) \\
    LSPF & 188.93 (1.26×) & 37.9 (-32.9\%) \\ \midrule
    \name (w/ only CAS) & \textbf{149.76} & \textbf{56.5} \\ \bottomrule
    \end{tabular}
    \caption{Performance of different scheduling strategies on the \emph{Trading} workflow. Caching is disabled to isolate scheduling effectiveness.}
    \label{tab:scheduling-effectiveness}
\end{table}

\vspace{0.05in}\noindent\textbf{Scheduling Effectiveness}.
Next, we study the effectiveness of \name's cost-based cache-aware scheduling by comparing it against four well-known scheduling strategies. To isolate the impact of scheduling, we disabled proactive KV and prompt caching. The baselines include: \emph{QueryWise}, which executes queries sequentially (implemented via LangGraph without batching); \emph{OpWise}, which executes operator-by-operator across the batch; \emph{Random}, which dispatches any ready request, reflecting LangGraph's default batch execution; and \emph{LSPF} \cite{sglang}, an online prefix-aware strategy, which we implemented by modifying vLLM to sort its request queue.

The results in \autoref{tab:scheduling-effectiveness} show that \name significantly outperforms all strategies. It achieves 4.40× and 2.80× speedups over \emph{QueryWise} and \emph{OpWise}, respectively; the former fails to exploit inter-query parallelism, while the latter thrashes the prefix cache by ignoring shared prefixes across operators. While \emph{Random} maximizes parallelism, \name achieves a 1.29× speedup by optimizing for both concurrency and cache reuse. More importantly, \name is 1.26× faster than \emph{LSPF}. \emph{LSPF}'s online, workflow-agnostic approach yields only a 1.5\% hit-rate improvement over \emph{Random}, whereas \name's global TRT-based optimization improves hit rates by 32.9\% over \emph{LSPF}, proving the necessity of informed global scheduling.

\begin{table}[t]
    \footnotesize
    \begin{tabular}{ccccccc}
    \toprule
    \multirow{2}{*}{\textbf{Sched. method}} & \multicolumn{3}{c}{\textbf{Total token step ($\times 10^6$)}} & \multicolumn{3}{c}{\textbf{Optimality gap (\%)}} \\ \cmidrule(lr){2-4} \cmidrule(lr){5-7}
    & \textbf{Avg} & \textbf{Min} & \textbf{Max} & \textbf{Avg} & \textbf{Min} & \textbf{Max} \\ \midrule
    QueryWise & 21.6 $\pm$ 8.6 & 10.2 & 34.2 & 72.4 $\pm$ 39.3 & 26.7 & 149.2 \\
    OpWise & 15.3 $\pm$ 6.9 & 7.0 & 26.4 & 17.6 $\pm$ 9.9 & 3.6 & 30.7 \\
    Random & 15.2 $\pm$ 7.2 & 7.0 & 27.7 & 16.3 $\pm$ 8.1 & 6.0 & 30.3 \\
    LSPF & 14.9 $\pm$ 6.9 & 7.0 & 26.9 & 14.5 $\pm$ 8.7 & 2.8 & 30.5 \\ \midrule
    \name & \textbf{13.4 $\pm$ 6.5} & \textbf{6.0} & \textbf{24.1} & \textbf{0.9 $\pm$ 1.4} & \textbf{0.0} & \textbf{3.6} \\ \bottomrule
    \end{tabular}
    \caption{Total token steps and optimality gaps of \name's scheduling algorithm compared to the baselines.}
    \label{tab:optimality-study-results}
\end{table}

\vspace{0.05in}\noindent\textbf{Scheduling Optimality}.
We assess the optimality of \name's scheduling algorithm by benchmarking it against the theoretical optimum derived from our problem formulation (Section~\ref{sec:processor}). We reformulate the scheduling task as a Mixed Integer Linear Program (MILP) and utilize an off-the-shelf solver~\cite{forrest2024cbc} to find the schedule that minimizes the total token step cost. We evaluate 7 scaled-down dataset-workflow configurations from Section~\ref{sec:microbenchmarks} (2-4 agents processing a batch of 2-4 queries) on a single LLM worker with an 8192-token KV cache limit, setting a 6-hour solver timeout per instance. Performance is quantified using the \emph{optimality gap}, defined as the percentage cost increase of a schedule $\sigma$ relative to the reference optimal schedule $\sigma^*$: $\text{Gap (\%)} = \frac{T(\sigma) - T(\sigma^*)}{T(\sigma^*)} \times 100$.

The results, detailed in \autoref{tab:optimality-study-results}, demonstrate that \name achieves near-optimal performance with an average optimality gap of 0.9\% and a maximum of 3.6\%, significantly outperforming all baselines. \emph{QueryWise} exhibits extreme suboptimality (up to 149.2\%) due to its failure to exploit inter-query parallelism. While \emph{OpWise} and \emph{Random} leverage parallelism, their lack of prefix awareness results in gaps exceeding 30\%. Similarly, although \emph{LSPF} improves upon \emph{Random} via online prefix matching, its myopic approach still yields sub-optimal schedules with gaps surpassing 30\%, standing in sharp contrast to \name's 3.6\%. These results confirm that \name's global cache-aware scheduling is essential for approaching theoretical optimality.

\subsection{Sensitivity Analysis}

To address \ref{enum:rq4}, we conduct a sensitivity analysis to evaluate \name's performance under various workload and system configurations. For these experiments, we compare \name against \emph{LangGraph}, a strong baseline in our prior experiments.

\begin{figure}[t]
    \centering
    \includegraphics[width=\linewidth]{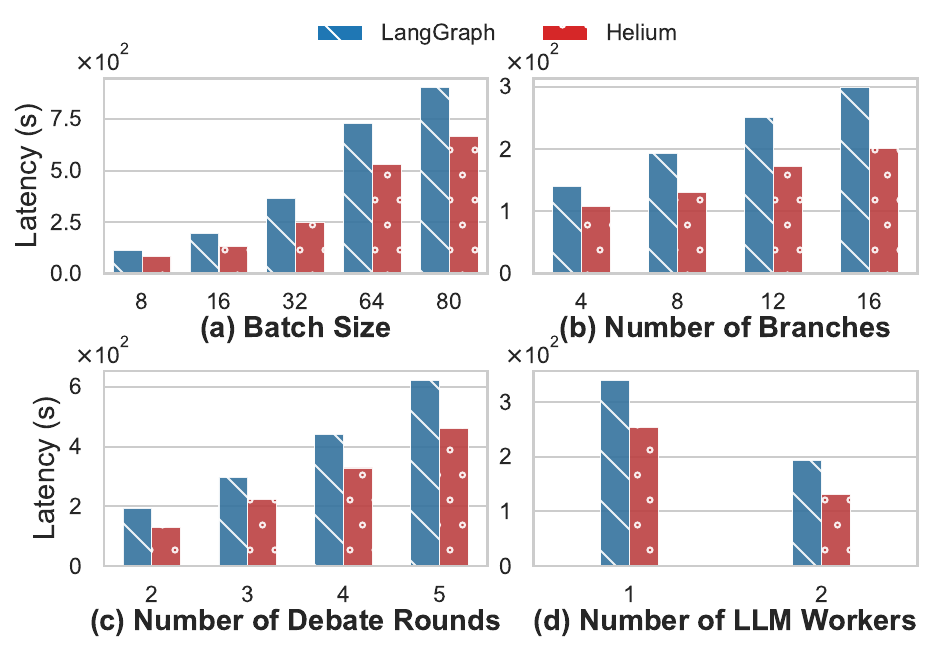}\vspace{-0.15in}
    \caption{Scalability of \name vs. \emph{LangGraph} on \emph{Trading} workflow across varying: (a) batch sizes, (b) parallel branches, (c) debate rounds, and (d) LLM workers.}
    \label{fig:scalability}
\end{figure}

\vspace{0.05in}\noindent\textbf{Workload Scalability}. We first analyze how \name's performance scales with different workload characteristics, as shown in \autoref{fig:scalability}. Using the \emph{Trading} workflow with Qwen3-8B, we vary the batch size from 8 to 80 (\autoref{fig:scalability}a). \name's performance advantage widens with larger batches, demonstrating that its caching and scheduling strategies effectively exploit the increased inter-query sharing opportunities in larger workloads, whereas \emph{LangGraph}'s black-box model cannot leverage such cross-query optimizations.

Next, we scale the number of parallel branches in the decision stage from 4 to 16 (\autoref{fig:scalability}b). \name consistently maintains its performance advantage across all configurations, efficiently handling the computational demands of highly parallel workflows. We also vary the number of debate rounds in the research stage and risk management module from 2 to 5 (\autoref{fig:scalability}c). As the workflow becomes more sequential with additional rounds, \name's performance advantage continues to grow due to its effective cache-aware scheduling that maximizes reuse across dependent operators. Furthermore, since complex workflows are typically compositions of these primitive patterns, these results demonstrate \name's ability to scale with increasing workflow complexity and adapt to a broad range of workflows involving larger operation counts. Finally, we evaluate scalability with varying numbers of LLM workers (\autoref{fig:scalability}d). While both systems benefit from additional computational resources, \name achieves superior scaling efficiency through its cache-aware scheduler, which optimizes operator placement to maximize cache reuse and minimize end-to-end latency.

\begin{figure}[t]
    \centering
    \includegraphics[width=\linewidth]{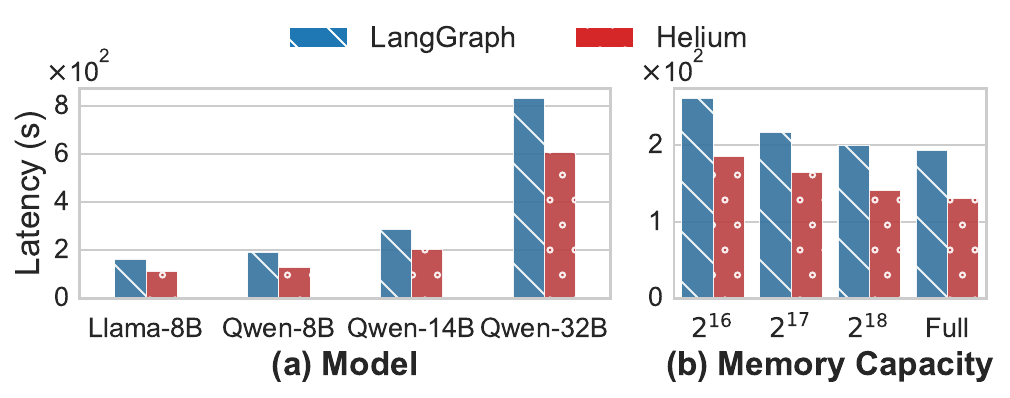}\vspace{-0.15in}
    \caption{Sensitivity to (a) LLM models and (b) KV cache capacity (tokens) on \emph{Trading} workflow. "Full" indicates KV cache occupying the entire GPU.}
    \label{fig:system-variations}
\end{figure}

\vspace{0.05in}\noindent\textbf{System and Model Variations}. We then assess \name's robustness to changes in the underlying system and LLM, with results in \autoref{fig:system-variations}. We evaluate performance on the \emph{Trading} workflow with four models: Llama-3.1-8B, Qwen3-8B, Qwen3-14B, and Qwen3-32B (\autoref{fig:system-variations}a). As expected, latency increases with model size for both systems. However, \name's performance advantage also grows. The cost of redundant computation, which \name is designed to remove, is greater with larger models. \name's advantage grows because it achieves greater savings as the underlying inference becomes more resource-intensive. In addition, larger models increase memory pressure due to a larger KV cache size per token. \name's scheduler helps with this by maximizing prefix reuse, making better use of the limited cache space.

Next, we simulate environments with constrained GPU memory by artificially limiting the KV cache size (\autoref{fig:system-variations}b). As memory pressure increases, both systems experience performance degradation from more frequent cache evictions. However, \name is more resilient; its cache-aware scheduler anticipates this pressure and organizes the execution plan to optimize cache reuse, showing better performance in resource-constrained systems.

\vspace{0.05in}\noindent\textbf{Prefix Sharing Sensitivity}. Finally, we investigate \name's robustness under varying degrees of prefix sharing. We evaluate end-to-end latency on the \emph{MapRed} and \emph{Debate} workflows using synthetic datasets constructed to mirror real-world patterns (e.g., TAT-QA): multiple \textit{questions} are associated with a shared \textit{context} and processed by agents with fixed \textit{system prompts}. We define the workload configuration as a quadruple of token lengths: \textit{system prompt}/\textit{context}/\textit{question}/\textit{output}.

\autoref{fig:prefix-sensitivity} illustrates the speedup relative to \emph{LangGraph} on a batch of 100 queries. In prefill-dominated, high-sharing scenarios (P1, P4), \name achieves substantial gains (up to 2.07× on \emph{Debate}) by eliminating redundant prefill computation and enhancing batching efficiency via proactive caching and cache-aware scheduling. Even in decode-oriented configurations or scenarios with highly divergent prefixes (P2, P5), \name maintains a consistent advantage. Notably, in scenarios with relatively short system prompts but shared contexts (P3, P6), \name's scheduler can identify and exploit the shared context across questions. This confirms \name's robustness in less favorable scenarios, demonstrating its ability to optimize diverse prompt structures beyond static system prompts.

\begin{figure}[t]
    \centering
    \includegraphics[width=\linewidth]{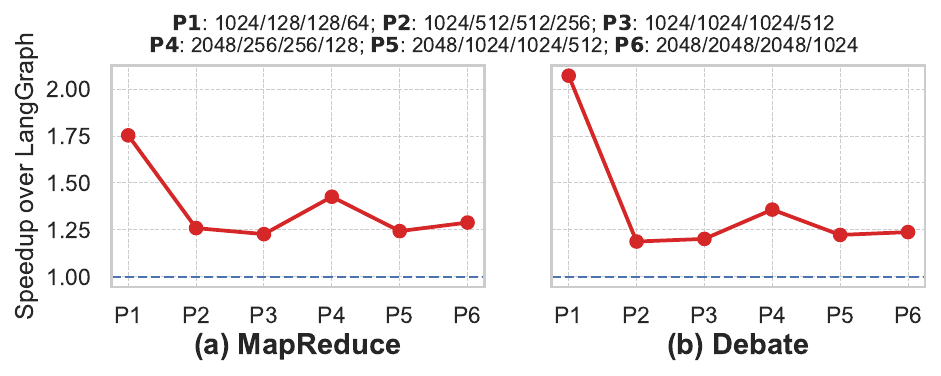}\vspace{-0.15in}
    \caption{Speedup of \name over \emph{LangGraph} on (a) \emph{MapRed} and (b) \emph{Debate} across prompt configurations, denoted by the token lengths of \textit{system prompt}/\textit{context}/\textit{question}/\textit{output}.}
    \label{fig:prefix-sensitivity}
\end{figure}

\subsection{Overhead Analysis}

To better understand the performance characteristics of \name, we conduct an overhead analysis to quantify the latencies introduced by key system components and the evaluation environment compared to the baselines.

\vspace{0.05in}\noindent\textbf{OpenAI API and Load Balancer}.
As detailed in Section~\ref{sec:microbenchmarks}, the baselines interact with the inference backend via the OpenAI API, managed by a lightweight load balancer. To verify that this architecture does not unfairly penalize the baselines, we profile the communication and routing latencies on the \emph{Trading} workflow (batch size 16, 1,408 total LLM calls). Our measurements show that the load balancer adds a minimal 89\,$\mu$s per request, while the average HTTP round-trip latency is as low as 3.1\,ms due to the client and server residing on the same machine. Furthermore, since the workload is dominated by GPU computation, these overheads are effectively masked by concurrency and remain negligible compared to the total end-to-end execution time.

\vspace{0.05in}\noindent\textbf{Component Latency Breakdown}.
To isolate the overhead of \name's internal components, including query optimization (QO), cache-aware scheduling (CAS), and TRT construction, we decompose the end-to-end latency on the \emph{Trading} workflow using Qwen3-8B with a batch size of 16. The results in \autoref{fig:overhead-analysis}(a) demonstrate that execution time is strictly dominated by query processing (QP). Even as workflow complexity increases to 16 branches, the combined planning overhead (QO, CAS, and TRT) remains negligible ($<$230\,ms), confirming that \name's optimization phase is highly efficient relative to the cost of LLM inference.

\vspace{0.05in}\noindent\textbf{Memory Footprint}.
We compare the peak memory usage of \name's scheduling structures (TRT and metadata) against SGLang's RadixCache metadata. \autoref{fig:overhead-analysis}(b) reveals a significant disparity: at 16 branches, \name consumes only 552\,KiB versus SGLang's 14.8\,MiB. This efficiency stems from our abstraction; \name's TRT scales with the workflow structure (number of operators and static prompt templates), whereas SGLang's RadixCache grows linearly with the number of requests and decoded tokens, resulting in significantly higher memory pressure.

\begin{figure}[t]
    \centering
    \includegraphics[width=\linewidth]{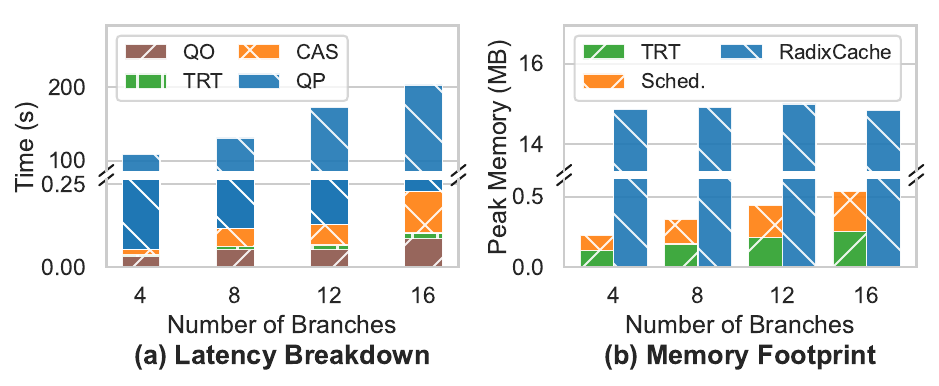}\vspace{-0.15in}
    \caption{Overhead analysis: (a) Latency breakdown: planning (QO, CAS, TRT) vs. query processing (QP); (b) Memory footprint of scheduling metadata.}
    \label{fig:overhead-analysis}
\end{figure}

\subsection{Case Study}
To provide a deeper insight into the dynamic behavior of \name's optimization and scheduling strategies, we conduct a case study on the \emph{Trading} workflow with a batch size of 16.

\vspace{0.05in}\noindent\textbf{Effective LLM Batch Sizes and Latency}.
To understand how \name's scheduling impacts execution dynamics, we trace the execution on two LLM workers, monitoring both inference batch sizes and per-request end-to-end latency. We observe two key batching metrics over time: the number of requests in each inference batch and the total number of tokens (including shared prefixes). To isolate the impact of scheduling, we disable proactive prompt caching for \name and compare its performance against \emph{LangGraph}.

The inference batching results, shown in \autoref{fig:case-study}(a) and (b), demonstrate that \name's cache-aware scheduling leads to more efficient batching. On average, \name processes 1.18× more requests per batch compared to \emph{LangGraph}. This improved batching directly contributes to better GPU utilization and lower end-to-end latency. Furthermore, \name achieves a 1.41× higher peak and a 1.18× higher average number of effective batched tokens. By scheduling requests with shared prefixes consecutively on the same worker, \name's scheduler not only maximizes KV cache reuse but also increases the effective batch size. This confirms that \name's scheduling strategy translates cache efficiency into larger, more effective batches, further improving overall system throughput.

Finally, we analyze how this improved throughput impacts the per-request latency distribution (\autoref{fig:case-study}(c)). \name demonstrates a significant advantage over \emph{LangGraph}, improving the median latency from 28.3\,s to 20.5\,s. Notably, the gap widens at the tail: \name reduces the 95th percentile latency to 37.2\,s compared to 51.7\,s for \emph{LangGraph}. This confirms that cache-aware scheduling effectively mitigates congestion, preventing the straggler requests that typically degrade total workflow completion time.

\begin{figure}[t]
    \centering
    \includegraphics[width=\linewidth]{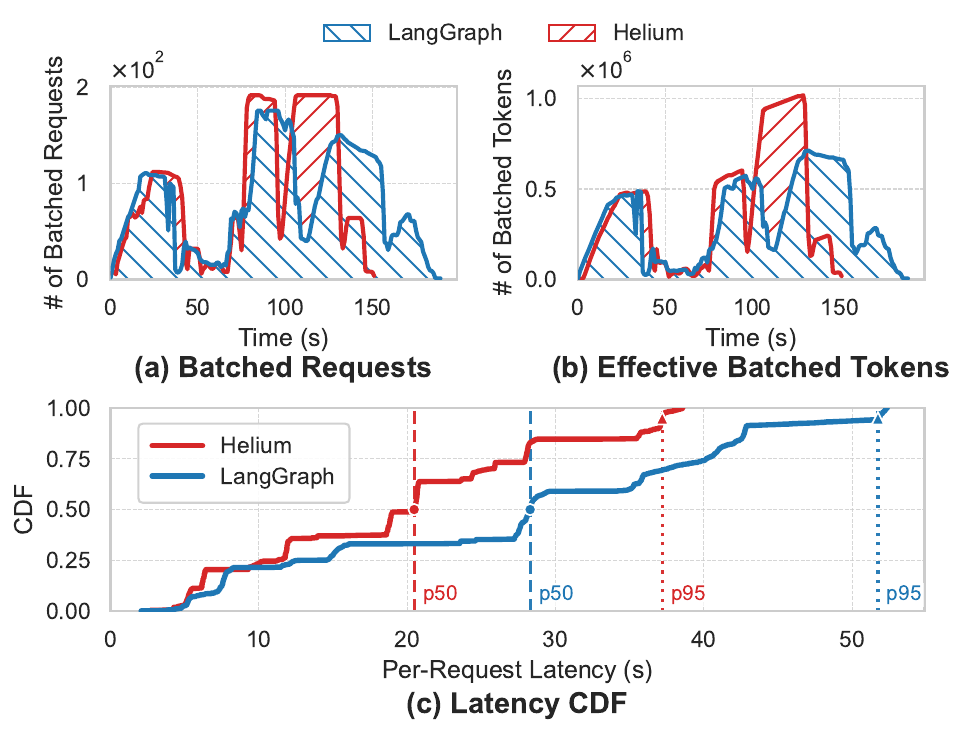}\vspace{-0.15in}
    \caption{Execution dynamics: (a) Batched requests; (b) Effective tokens; (c) Per-request latency CDF. \name's batch efficiency significantly reduces tail latency.}
    \label{fig:case-study}
\end{figure}

\vspace{0.05in}\noindent\textbf{Optimization and Scheduling in Action}. To illustrate \name's optimization and scheduling strategies operate in practice, we trace the execution of the \emph{Trading} workflow. \autoref{fig:optimization} visualizes this process, focusing on a simplified subgraph of the \emph{analyst} stage for clarity. The process begins with the initial workflow DAG (\autoref{fig:optimization}a). First, \name's query optimizer probes the proactive prompt cache and identifies that the outputs for the fundamentals and social media agents are already cached. It rewrites the logical plan by replacing these subgraphs with \texttt{CacheFetch} operators, effectively pruning two branches from the execution graph (\autoref{fig:optimization}b).

The query processor then receives this optimized logical plan, constructs a TRT to capture the prefix structure (\autoref{fig:optimization}c), and applies its cache-aware scheduling algorithm. The resulting schedule (\autoref{fig:optimization}d) demonstrates how the system balances parallelism and prefix cache reuse. The scheduler first groups the execution of \texttt{Op1} and \texttt{Op2} (market agent) to maximize the reuse of their shared prompt prefix across the query batch. Then, instead of idling while waiting for the dependent operator (\texttt{Op3}), the scheduler interleaves the independent operators from the news agent (\texttt{Op4} and \texttt{Op5}). This cost-based decision hides the dependency latency of \texttt{Op3} and maximizes GPU utilization. Then, \texttt{Op3} and \texttt{Op6} are scheduled once their inputs are ready. This demonstrates how \name's holistic approach produces an efficient execution plan that is unattainable by systems that lack a global view of the workflow.

\begin{figure}[t]
    \centering
    \includegraphics[width=\linewidth]{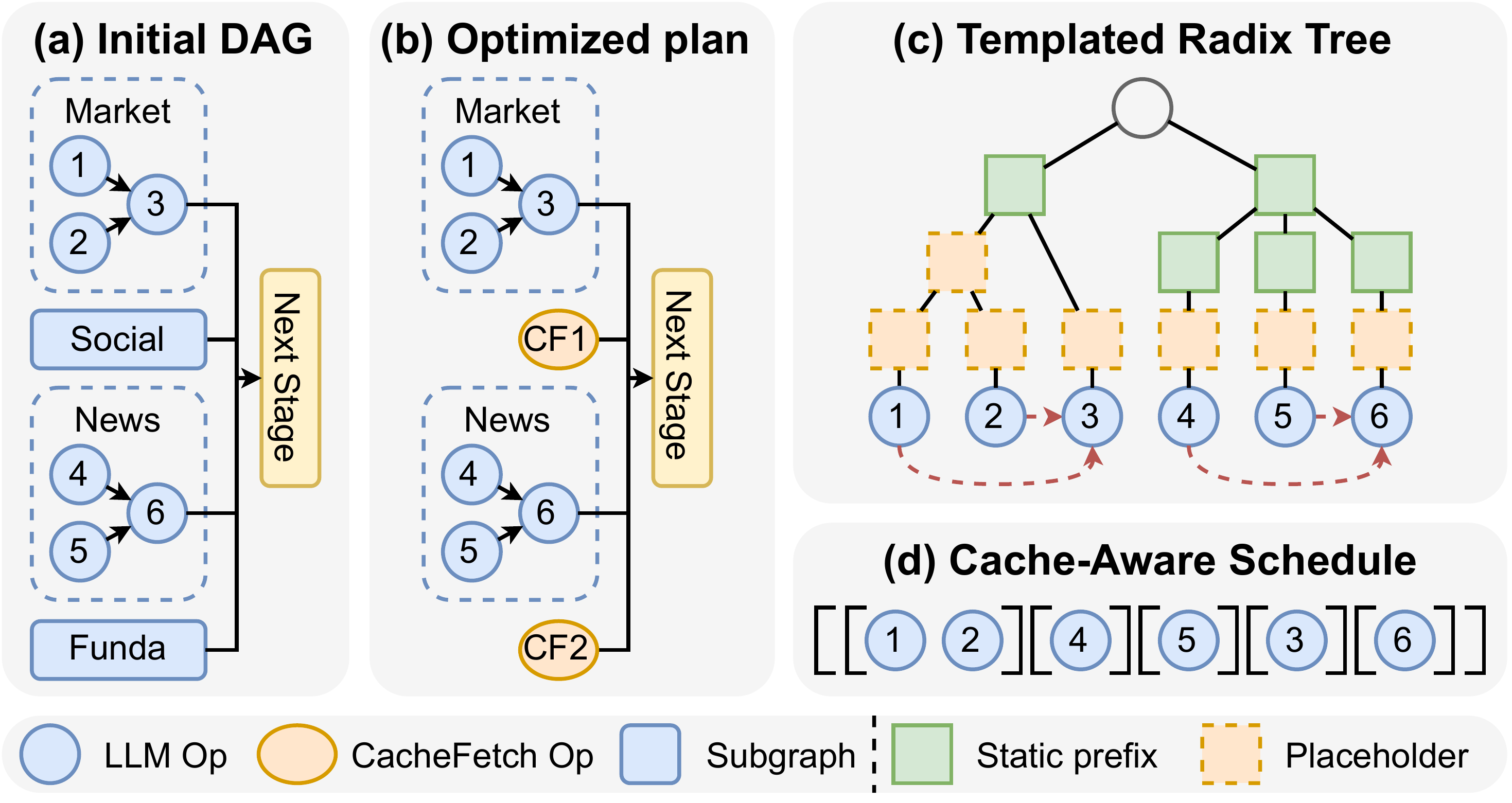}\vspace{-0.1in}
    \caption{\name's optimization and scheduling process: (a) initial DAG, (b) optimized logical plan, (c) TRT construction, and (d) cache-aware schedule.}
    \label{fig:optimization}
\end{figure}

%% file: sections/8_related_work.tex
\section{Related Work}
\vspace{0.05in}\noindent\textbf{LLM Inference Optimizations}.
LLM inference optimizations span multiple levels. Kernel and system-side work enhances throughput and utilization~\cite{flashattention, flashattention-2, flashattention-3, flashinfer, pod-attention, fasttree, orca, sarathi-serve, batchllm}. KV cache management and parallelism enable larger contexts and scaling~\cite{vllm, vattention, megatron-lm}, while disaggregated serving reduce interference and tailors resources~\cite{splitwise, distserve}. Resource multiplexing across tuning and serving further improves efficiency~\cite{he2025resource, he2024deferred}. Speculative decoding accelerates generation~\cite{leviathan2023specdec, miao2024specinfer, cai2024medusa, shen2025hybriddrafting}. Techniques like SGLang~\cite{sglang} and others~\cite{prompt-cache, cache-blend, chunkattention, infinigen} enhance KV cache reuse. However, these workflow-agnostic stacks cannot eliminate agentic workflow redundancy or schedule across queries for maximum reuse.

\vspace{0.05in}\noindent\textbf{LLM-based Agentic Workflows}.
Agentic workflows tackle complex tasks across domains such as software engineering, social simulation, and finance~\cite{llm-mas-survey, metagpt, chatdev, generative-agents, econagent, asfm, cryptotrade, tradingagents, elliottagents, fin-vision}. Frameworks like LangChain, LangGraph, and AutoGen simplify development and orchestration but miss system-level optimizations~\cite{langchain, langgraph, autogen}. Recent systems like Ayo~\cite{ayo}, Parrot~\cite{parrot}, Autellix~\cite{autellix}, KVFlow~\cite{kvflow}, Halo~\cite{shen2025halo}, and FlowMesh~\cite{shen2025flowmesh} introduce workflow awareness to improve scheduling, caching, or composability. Yet, none of these target eliminating prompt redundancy across queries or performing workflow-level, cache-aware scheduling to maximize prefix reuse, as \name does.

\vspace{0.05in}\noindent\textbf{LLMs in Data Systems}.
Data systems integrate LLMs through UDFs and SQL extensions for batch analytics~\cite{spark, databricks}. Prior works like Palimpzest~\cite{palimpzest} and others~\cite{optimizing-llm-queries} optimize LLM-driven analytics via cost models or prompt reordering, often trading off quality. In contrast, \name improves agentic workflow efficiency by reconciling workflow-agnostic serving with database query optimization. By treating LLMs as white-box operators, \name enables proactive caching and cache-aware scheduling to minimize redundant computation while preserving exact semantics.

%% file: sections/9_limitations.tex
\section{Limitations and Future Work}

While Helium establishes a foundation for workflow-aware LLM serving, our current design choices prioritize cache locality and scheduling efficiency. We briefly discuss their implications and future directions.

\vspace{0.05in}\noindent\textbf{Expressiveness of DAG Abstraction}. Our DAG-based abstraction enables powerful optimizations but faces challenges with dynamic control flows common in emerging agentic patterns, such as conditional looping and dynamic mapping, where downstream fan-out depends on runtime outputs. While traditional DAGs struggle with these structures~\cite{zhang2025aflow, kvflow, airflow}, integrating control flow primitives (e.g., as in TensorFlow~\cite{abadi2016tensorflow}) offers a path to statically capture them. We plan to explore these extensions in future work.

\vspace{0.05in}\noindent\textbf{External API Calls}. Integrating external tools (e.g., web search, databases) is critical for agentic workflows but introduces unpredictable latencies that complicate our cost-based scheduling. Currently, \name handles these via best-effort execution. Future work could incorporate stochastic cost models to better account for external variability, enabling more robust global optimization even with black-box API dependencies.

%% file: sections/10_conclusion.tex
\section{Conclusion}

This paper introduces Helium, a workflow-aware serving system that models agentic LLM workloads as query plans with LLMs as first-class operators. By combining proactive caching with cost-based, cache-aware scheduling, Helium removes redundant computation and boosts hardware efficiency. Experiments show substantial speedups over existing frameworks, demonstrating that applying query-optimization principles to LLM serving enables scalable, end-to-end efficiency for agentic AI systems.

%% file: sections/appendix.tex
\clearpage
\appendix

\section{Templated Radix Tree Construction}
\label{sec:trt-construction}

The templated radix tree (TRT) is built from the optimized workflow DAG to capture the shared prefix structure and data dependencies among all LLM operators. The construction process, outlined in Algorithm~\ref{alg:trt-construction}, translates the operator-level graph into the TRT's prefix-based representation.

First, we perform a topological sort of the workflow's operators to ensure that an operator's dependencies are processed before the operator itself (Line~\ref{lst:line:topological-sort}). The algorithm then iterates through the sorted operators (Line~\ref{lst:line:iterate}). For each LLM operator, it constructs a \textit{prefix template}, a sequence of static tokens and dynamic placeholders that represent inputs from antecedent operators (Line~\ref{lst:line:get-prefix-template}). This template is generated by recursively traversing the operator's inputs within the DAG: static strings are appended directly, while dependencies on other operators are encoded as placeholders.

With the prefix template and its set of LLM dependencies identified (Line 8), the operator is inserted into the TRT (Line~\ref{lst:line:get-llm-dependencies}). The \texttt{tree.add} method implements standard radix tree insertion, traversing from the root to find the longest common prefix. If the template diverges from an existing path, the relevant node is split to create a new branch. The LLM operator is then added as a new leaf node. Its dependencies on other LLM operators are recorded as directed edges that connect it to the corresponding leaf nodes retrieved from the \texttt{node\_map}. This map maintains the correspondence between operator IDs and their nodes in the TRT, ensuring that dependencies are correctly resolved as the tree is built (Line~\ref{lst:line:node-map}).

The time complexity of this procedure is determined by the initial graph traversal and the subsequent insertion of each LLM operator into the tree. Let $N$ be the number of operators and $|E|$ be the number of dependency edges in the workflow DAG. The topological sort takes $O(N + |E|)$ time. The total work to generate all prefix templates is also bounded by $O(N + |E|)$, as it involves traversing the DAG. Let $N_{llm} \leq N$ be the number of LLM operators and $L$ be the maximum length of a prefix template. Inserting one template takes $O(L)$ time. The total complexity is therefore $O(N + |E| + N_{llm} \cdot L)$. In a connected DAG, $|E|$ is at least $N-1$, so the complexity simplifies to $O(|E| + N_{llm} \cdot L)$.

\begin{algorithm}[h]
\caption{Templated Radix Tree Construction}
\label{alg:trt-construction}
    \begin{algorithmic}[1]
        \State \textbf{Input:} A workflow DAG, workflow\_dag; a worker assignment map, worker\_assignment.
        \State \textbf{Output:} A templated radix tree, tree.
        \Statex
        \State tree $\gets$ \Call{TemplatedRadixTree}{\null}
        \State node\_map $\gets$ empty map from operator ID to TRT node
        \State sorted\_ops $\gets$ \Call{TopologicalSort}{workflow\_dag} \label{lst:line:topological-sort}
        \For{op \textbf{in} sorted\_ops} \label{lst:line:iterate}
            \If{op \textbf{is} LLM op}
                \State template $\gets$ \Call{GetPrefixTemplate}{op} \label{lst:line:get-prefix-template}
                \State deps $\gets$ \Call{GetLLMDependencies}{op, node\_map} \label{lst:line:get-llm-dependencies}
                \State worker $\gets$ worker\_assignment[op.id]
                \State node $\gets$ tree.\Call{Add}{template, worker, op.id, deps}
                \State node\_map[op.id] $\gets$ node \label{lst:line:node-map}
            \EndIf
        \EndFor
        \State \textbf{return} tree
    \end{algorithmic}
\end{algorithm}

\section{Complexity Analysis of the Scheduling Algorithm}\label{sec:complexity-analysis}

In this section, we provide a formal proof for the time complexity of our cache-aware scheduling algorithm, as detailed in Algorithm~\ref{alg:scheduling-concise}. We demonstrate that the algorithm has a polynomial time complexity in the size of the TRT and the dependency graph.

\vspace{0.05in}\noindent\textbf{Definitions}
Let $T=(V, E)$ be the input TRT, where $|V|$ is the number of nodes. Let $L \subset V$ be the set of leaf nodes, which represent the LLM operators to be scheduled. The dependencies among these operators are given by the DAG $G=(L, E')$, where $|E'|$ is the number of dependency edges. Let $|V_{int}| = |V| - |L| - 1$ be the number of internal nodes in $T$, let $d = \text{depth}(T)$ be the maximum depth of the TRT, and let $c_{max}$ be the maximum branching factor (i.e., number of children) of any node in $T$.

\vspace{0.05in}\noindent\textbf{Proof of Complexity}
\begin{theorem}
    The time complexity of the scheduling algorithm (Algorithm 1) is $O(|V_{int}| \cdot c_{max}^3 + |E'| \cdot d)$.
\end{theorem}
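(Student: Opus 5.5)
We bound the total work as the sum of two separately charged parts: (i) work performed at internal TRT nodes by \textsc{SelectChildren} and the per-child loop of \textsc{Recurse}, and (ii) dependency-related work performed when leaves are scheduled, namely \textsc{CanSchedule} checks and the state propagation in \textsc{UpdateState}. We treat \textsc{PartitionWorkflow} and \textsc{BuildSchedulingTree} as preprocessing. Their cost is linear in the TRT size and the number of dependency edges, as in the construction analysis of \autoref{sec:trt-construction}, so it is absorbed into the final bound. Throughout, we use an amortization convention. Whenever \textsc{Recurse} visits a node, that visit either schedules a leaf, removes a child, or is one of a bounded number of forced visits. Each leaf is scheduled once and each child is removed once, so the total number of productive node visits over all outer \textbf{while} iterations is $O(|V|)$.

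\textbf{Internal-node cost.} Fix an internal node $v$ with at most $c_{max}$ children. \textsc{SelectChildren} ranks the children by the critical-path heuristic, using aggregated dependency depth and earliest schedulable token step. We maintain these aggregates incrementally, so one call costs at most $O(c_{max}^2)$: each pairwise comparison is $O(1)$, and the sort or selection is $O(c_{max}\log c_{max}) \subseteq O(c_{max}^2)$. The cost $O(c_{max}^2)$ per call also covers the \textsc{UpdateState} recomputation over \texttt{node.children} that follows each child inside the loop, which costs $O(c_{max})$ per child. Next we bound how many times $v$ is productively revisited. Each revisit of $v$ that makes progress removes at least one child or advances one child's subtree until that child is removed. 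Charging each revisit to the child it advances gives $O(c_{max})$ productive revisits of $v$, and the forced pass advances the first child only, so it fits the same count. The cost per internal node is therefore $O(c_{max}) \cdot O(c_{max}^2) = O(c_{max}^3$). Summing over all internal nodes yields $O(|V_{int}| \cdot c_{max}^3)$. The root and the leaf parents are included in the same count up to constants.

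\textbf{Dependency cost.} When a leaf $l$ is scheduled, each outgoing edge $(l,l') \in E'$ must update the readiness and earliest-start-step of $l'$. It must also refresh the aggregated earliest-schedulable values stored on the ancestors of $l'$, because \textsc{SelectChildren} relies on those values. That refresh walks the root-to-leaf path, which takes $O(d)$ time per edge. Similarly, \textsc{CanSchedule} for $l'$ checks its incoming edges, and these checks are charged to those edges. Each edge is processed once when its source completes, so this part costs $O(|E'| \cdot d)$. Adding the two parts gives the stated bound.

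\textbf{Main obstacle.} The delicate step is the revisit argument. The outer \textbf{while} loop may call \textsc{Recurse} repeatedly without any progress, and then a forced pass is needed. We must show that unproductive traversals do not blow up the total. The plan is to argue that each non-forced call that returns \texttt{false} is immediately followed by a forced call. The forced call schedules at least one leaf, namely the earliest-start one, and it only descends along first children. Its cost is therefore $O(d\,c_{max}^2)$, and that cost can be charged to the leaf it schedules. We must still verify that the preceding unsuccessful full pass only touches nodes whose aggregates changed since their last visit, which lets its cost be charged to the $O(c_{max}^3)$ per-node budget or to the edge updates. If that charging fails, we would additionally argue that stuck situations are bounded by $|L| \le c_{max}|V_{int}|$ and absorb the extra cost there.
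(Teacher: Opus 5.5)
Your accounting has the same shape as the paper's: $O(c_{max})$ child selections per internal node at $O(c_{max}^2)$ each, plus $O(d)$ ancestor work per edge of $E'$. Your $O(|E'|\cdot d)$ term is essentially the paper's argument. The gap is the step you flag yourself. The bound of $O(c_{max})$ visits per internal node is never established, and each argument you offer for it fails.

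\begin{itemize}
\item Charging a productive revisit of $v$ to the child it advances does not give $O(c_{max})$ revisits. The same child can be advanced in many passes before it is removed, one leaf of its subtree per pass, while the others are blocked by dependencies or precedence delays. This only bounds the visits of $v$ by the number of leaves below $v$.
\item The forced call does not merely descend along first children. In Algorithm~1, resetting \texttt{force} to false does not exit the \textbf{for} loop, so every other child returned by \textsc{SelectChildren} is still recursed into. A forced pass can therefore cost as much as a full pass, not $O(d\,c_{max}^2)$.
\item The fallback absorbs nothing. Up to $|L|$ stalled outer iterations, each possibly revisiting every remaining internal node at $O(c_{max}^2)$, cost $O(|L|\cdot|V_{int}|\cdot c_{max}^2)$. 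That exceeds $O(|V_{int}|\cdot c_{max}^3)$ by a factor of up to $|L|/c_{max}$.
\end{itemize}

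The same repeated-visit issue also undercuts charging the incoming-edge checks of \textsc{CanSchedule} once per edge. That charging only works if readiness is kept as a counter updated when each source completes.

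The paper never amortizes the outer \textbf{while} loop or the forced passes. It models the work at an internal node $u$ as a selection loop that retires one remaining child per iteration, so the loop runs at most $c_u$ times. It prices a selection among $k$ remaining siblings at $O(k^2)$, from building their temporary dependency graph and analyzing it via strongly connected components; this plays the role of your per-call $O(c_{max}^2)$. Summing gives $\sum_{k=1}^{c_u} O(k^2) = O(c_u^3)$ per node, and separately each edge of $E'$ is re-evaluated at most $d$ times.

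To finish along these lines, state that premise explicitly: each child of an internal node is selected $O(1)$ times, because a selection drives its subtree to completion. Then charge the selection cost to that. Without the premise, Algorithm~1's literal multi-pass loop can incur up to $|L|$ stalled passes. As your own fallback computation shows, bounding their number by $|L|$ does not yield the stated complexity.
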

\begin{proof}
    The algorithm consists of two primary phases: an initialization phase and a recursive scheduling phase. We analyze the complexity of each phase separately.
    
    \vspace{0.05in}\noindent\textbf{1. Initialization Phase}:
    The initialization phase constructs the internal scheduling tree data structure from the input TRT and the dependency graph $G$. This involves three main steps:
    \begin{enumerate}
        \item Traversing the TRT ($T$) to create corresponding scheduling nodes, which takes $O(|V|)$ time.
        \item Processing the dependency graph $G=(L, E')$ to compute the depth of each LLM operator (leaf node). This is equivalent to a topological sort on $G$, which costs $O(|L| + |E'|)$.
        \item Building the sibling dependency maps by performing a post-order traversal of the TRT. During this traversal, each dependency in $E'$ is examined to determine if it is internal to a subtree or connects sibling subtrees. This step has a complexity of $O(|V| + |E'|)$.
    \end{enumerate}
    Therefore, the time complexity of the initialization phase is dominated by the tree and graph traversals, resulting in $O(|V| + |E'|)$.
    
    \vspace{0.05in}\noindent\textbf{2. Recursive Scheduling Phase}:
    The core of the algorithm lies in the \texttt{Schedule} function, which invokes the recursive function \texttt{Recurse}. The complexity is dominated by the work done within all calls to \texttt{Recurse}. We analyze the work done at a single internal node $u \in V_{int}$. The work at an internal node $u$ is driven by the \texttt{while} loop that iterates until all of its children are scheduled. Let $c_u$ be the number of children of node $u$. The loop runs at most $c_u$ times. Within each iteration, the most computationally intensive operations are \texttt{SelectBestChild} and the subsequent state updates.
    \begin{itemize}[leftmargin=*]
        \item \textbf{Cost of \texttt{SelectBestChild}}: This function implements our cost-based heuristic. To determine the optimal child to schedule next (based on progress and readiness), the function must analyze the inter-dependencies among the currently unscheduled sibling subtrees. This requires constructing a temporary dependency graph among the $k$ remaining children and analyzing its structure (e.g., by finding strongly connected components to determine topological depth). Such an analysis has a complexity of at least $O(k^2)$. The total work for the \texttt{while} loop at node $u$ is therefore the sum of this cost over its iterations: $$ \sum_{k=1}^{c_u} O(k^2) = O(c_u^3) $$ Summing this cost over all internal nodes in the tree gives the first term of our complexity: $O(|V_{int}| \cdot c_{max}^3)$.
        \item \textbf{Cost of State Updates}: After scheduling a child, the \texttt{UpdateState} function is called on the remaining sibling children. This propagates the effects of newly resolved dependencies. Over the entire execution of the algorithm, a single dependency edge in $E'$ may be re-evaluated each time a scheduling decision is made that could affect its status. The maximum number of times an edge can be re-evaluated is proportional to its depth in the TRT, which is bounded by $d$. Therefore, the total cumulative work for all state updates across the entire algorithm is bounded by $O(|E'| \cdot d)$. 
    \end{itemize}
    Combining the costs from the initialization and scheduling phases, the dominant factor is the recursive scheduling phase. The total time complexity is the sum of the costs of iterative child selection and cumulative state updates. $$ O(|V| + |E'|) + O(|V_{int}| \cdot c_{max}^3 + |E'| \cdot d) $$ As $|V|$ and $|E'|$ are subsumed by the terms in the scheduling phase for any non-trivial workflow, the final complexity is $O(|V_{int}| \cdot c_{max}^3 + |E'| \cdot d)$.
\end{proof}

\section{Agentic Workflows DSL}
\label{sec:appendix-dsl}

\name's DSL allows users to define agentic workflows using a concise set of primitive operators, listed in \autoref{tab:ops}. Designed as a lazy, symbolic dataflow system, the DSL separates workflow definition from execution.
\begin{itemize}[leftmargin=*]
    \item \textbf{IO Operators}: \texttt{Input} and \texttt{Output} define the workflow's interface, allowing batches of data to be injected and results to be retrieved. \texttt{Data} nodes encapsulate static batch data embedded directly within the workflow DAG.
    \item \textbf{Prompt Operators}: \texttt{Format} handles string interpolation for prompt templating, while \texttt{Lambda} allows users to inject arbitrary Python code for stateless transformations (e.g., parsing LLM outputs).
    \item \textbf{Model Operators}: The \texttt{LLM} operator encapsulates calls to the inference engine. It accepts formatted prompts and conversation history, returning generated responses that flow to downstream operators.
\end{itemize}
By composing these primitives, users build a symbolic DAG that the \name runtime analyzes to perform global optimizations like proactive caching and cache-aware scheduling.

\begin{table}[h]\centering
\footnotesize
\begin{tabular}{p{0.08\columnwidth}p{0.08\columnwidth}p{0.68\columnwidth}}
\toprule
\textbf{Op} & \textbf{Type} & \textbf{Description} \\
\midrule
Data & IO &  Encapsulates a data batch as a node in the DAG. \\
Input & IO & Placeholder for data provided at compile time. \\
Output & IO & Materializes data as a final output of the workflow. \\
Format & Prompt & Formats input text using specified arguments. \\
Lambda & Prompt & Applies an arbitrary Python function to the input data. \\
LLM & Model & Generates LLM responses for input conversation contexts. \\
\bottomrule
\end{tabular}
\caption{Examples of Helium's primitive operators.}\label{tab:ops}
\end{table}

\section{Prefix Cache Utilization}\label{sec:prefix-cache-util}
In this section, we describe additional prefix cache utilization results following our end-to-end benchmark. We evaluate \name and the baselines on the \emph{Trading} workflow with a batch size of 16 using the Qwen3-8B and Qwen3-14B models. The results are summarized in \autoref{tab:cache-hit-rate}.

\begin{table}[h]
    \footnotesize
    \begin{tabular}{cll}
    \toprule
    \multirow{2}{*}{\textbf{System}} & \multicolumn{2}{c}{\textbf{Cache hit rate (\%)}}                      \\ \cline{2-3} 
                            & \multicolumn{1}{c}{\textbf{Qwen3-8B}} & \multicolumn{1}{c}{\textbf{Qwen3-14B}} \\ \midrule
    vLLM                    & 42.9 (-27.5\%)               & 43.0 (+0.3\%)                 \\
    OpWise                  & 40.8 (-31.1\%)               & 39.9 (-6.9\%)                 \\
    LangGraph               & 37.1 (-37.3\%)               & 36.7 (-14.4\%)                \\
    AgentScope              & 39.3 (-33.7\%)               & 36.5 (-14.9\%)                \\
    Parrot                  & 41.8 (-29.4\%)               & 36.9 (-13.8\%)                \\
    KVFlow                  & 39.1 (-33.9\%)               & 40.8 (-4.9\%)                 \\ \hline
    \name                   & 59.2                         & 42.9                          \\ \bottomrule
    \end{tabular}
    \caption{Prefix-cache hit rates on the \emph{Trading} workflow (batch size = 16) for Qwen3-8B and Qwen3-14B, with relative differences vs \name.}
    \label{tab:cache-hit-rate}
\end{table}

Overall, \name achieves the highest prefix cache hit rates for Qwen3-8B across all baselines, and is competitive for Qwen3-14B. For Qwen3-8B, \name outperforms all baselines by 27.5\% to 37.3\%. The largest gains are over \emph{LangGraph} (37.3\%) and \emph{KVFlow} (33.9\%), both of which lack the global scheduling awareness to align requests with shared prefixes. \emph{OpWise} achieves even lower hit rate than \emph{vLLM} for Qwen3-8B, as the \emph{Trading} workflow exhibits more prefix sharing across operators (e.g., system prompts shared across agents) than across queries, making operator-by-operator execution especially harmful for cache reuse.

For Qwen3-14B, \name's advantage is more modest (up to 14.9\%), and \emph{vLLM} achieves a marginally higher hit rate (+0.3\%) than \name. The larger KV cache footprint of Qwen3-14B increases memory pressure, which limits \name's ability to retain all precomputed prefixes and reduces the effectiveness of proactive caching relative to the smaller model. Importantly, however, \emph{vLLM}'s competitive hit rate does not translate into competitive end-to-end latency: as shown in Section~\ref{sec:end-to-end}, \emph{vLLM} incurs extremely high latency because it executes each query's operators sequentially, foregoing the batch computation that \name's scheduler exploits. This confirms that a high cache hit rate alone is insufficient: global workflow-aware scheduling is essential to translate cache reuse into end-to-end performance gains.

\emph{Parrot} trails \name by 29.4\% for Qwen3-8B. Although \emph{Parrot} dispatches requests sharing the same prefix to the same engine to maximize cache hits, its routing heuristic creates severe load imbalance across workers. This congestion increases memory pressure on overloaded engines, triggering more frequent KV cache evictions that paradoxically reduce cache hit rates. The result is that \emph{Parrot}'s strategy undermines its own prefix-reuse objective, leading to both lower hit rates and significantly higher end-to-end latency as shown in Section~\ref{sec:end-to-end}. These results collectively highlight the advantage of \name's cost-based, globally-aware approach to cache-aware scheduling.

%% file: references.bib
@inproceedings {parrot,
author = {Chaofan Lin and Zhenhua Han and Chengruidong Zhang and Yuqing Yang and Fan Yang and Chen Chen and Lili Qiu},
title = {Parrot: Efficient Serving of {LLM-based} Applications with Semantic Variable},
booktitle = {18th USENIX Symposium on Operating Systems Design and Implementation (OSDI 24)},
year = {2024},
isbn = {978-1-939133-40-3},
address = {Santa Clara, CA},
pages = {929--945},
url = {https://www.usenix.org/conference/osdi24/presentation/lin-chaofan},
publisher = {USENIX Association},
month = jul
}

@inproceedings{multiagent-debate,
author = {Du, Yilun and Li, Shuang and Torralba, Antonio and Tenenbaum, Joshua B. and Mordatch, Igor},
title = {Improving factuality and reasoning in language models through multiagent debate},
year = {2024},
publisher = {JMLR.org},
booktitle = {Proceedings of the 41st International Conference on Machine Learning},
articleno = {467},
numpages = {31},
location = {Vienna, Austria},
series = {ICML'24}
}

@article{agentscope-simulation,
  publtype={informal},
  author={Xuchen Pan and Dawei Gao and Yuexiang Xie and Zhewei Wei and Yaliang Li and Bolin Ding and Ji-Rong Wen and Jingren Zhou},
  title={Very Large-Scale Multi-Agent Simulation in AgentScope},
  year={2024},
  cdate={1704067200000},
  journal={CoRR},
  volume={abs/2407.17789},
  url={https://doi.org/10.48550/arXiv.2407.17789}
}

@article{spark,
author = {Zaharia, Matei and Xin, Reynold S. and Wendell, Patrick and Das, Tathagata and Armbrust, Michael and Dave, Ankur and Meng, Xiangrui and Rosen, Josh and Venkataraman, Shivaram and Franklin, Michael J. and Ghodsi, Ali and Gonzalez, Joseph and Shenker, Scott and Stoica, Ion},
title = {Apache Spark: a unified engine for big data processing},
year = {2016},
issue_date = {November 2016},
publisher = {Association for Computing Machinery},
address = {New York, NY, USA},
volume = {59},
number = {11},
issn = {0001-0782},
url = {https://doi.org/10.1145/2934664},
doi = {10.1145/2934664},
journal = {Commun. ACM},
month = oct,
pages = {56–65},
numpages = {10}
}

@inproceedings{
mmlu,
title={Measuring Massive Multitask Language Understanding},
author={Dan Hendrycks and Collin Burns and Steven Basart and Andy Zou and Mantas Mazeika and Dawn Song and Jacob Steinhardt},
booktitle={International Conference on Learning Representations},
year={2021},
url={https://openreview.net/forum?id=d7KBjmI3GmQ}
}

@inproceedings{tatqa,
    title = "{TAT}-{QA}: A Question Answering Benchmark on a Hybrid of Tabular and Textual Content in Finance",
    author = "Zhu, Fengbin  and
      Lei, Wenqiang  and
      Huang, Youcheng  and
      Wang, Chao  and
      Zhang, Shuo  and
      Lv, Jiancheng  and
      Feng, Fuli  and
      Chua, Tat-Seng",
    editor = "Zong, Chengqing  and
      Xia, Fei  and
      Li, Wenjie  and
      Navigli, Roberto",
    booktitle = "Proceedings of the 59th Annual Meeting of the Association for Computational Linguistics and the 11th International Joint Conference on Natural Language Processing (Volume 1: Long Papers)",
    month = aug,
    year = "2021",
    address = "Online",
    publisher = "Association for Computational Linguistics",
    url = "https://aclanthology.org/2021.acl-long.254/",
    doi = "10.18653/v1/2021.acl-long.254",
    pages = "3277--3287"
}

@misc{iterative-sum,
	author = {LangChain},
	title = {{H}ow to summarize text through iterative refinement | {L}ang{C}hain --- python.langchain.com},
	howpublished = {\url{https://python.langchain.com/docs/how_to/summarize_refine/}},
	year = {[n.d.]},
	note = {[Accessed 26-08-2025]},
}

@article{amazon-reviews,
  title={Bridging Language and Items for Retrieval and Recommendation},
  author={Hou, Yupeng and Li, Jiacheng and He, Zhankui and Yan, An and Chen, Xiusi and McAuley, Julian},
  journal={arXiv preprint arXiv:2403.03952},
  year={2024}
}

@misc{qwen3,
      title={Qwen3 Technical Report}, 
      author={An Yang and Anfeng Li and Baosong Yang and Beichen Zhang and Binyuan Hui and Bo Zheng and Bowen Yu and Chang Gao and Chengen Huang and Chenxu Lv and Chujie Zheng and Dayiheng Liu and Fan Zhou and Fei Huang and Feng Hu and Hao Ge and Haoran Wei and Huan Lin and Jialong Tang and Jian Yang and Jianhong Tu and Jianwei Zhang and Jianxin Yang and Jiaxi Yang and Jing Zhou and Jingren Zhou and Junyang Lin and Kai Dang and Keqin Bao and Kexin Yang and Le Yu and Lianghao Deng and Mei Li and Mingfeng Xue and Mingze Li and Pei Zhang and Peng Wang and Qin Zhu and Rui Men and Ruize Gao and Shixuan Liu and Shuang Luo and Tianhao Li and Tianyi Tang and Wenbiao Yin and Xingzhang Ren and Xinyu Wang and Xinyu Zhang and Xuancheng Ren and Yang Fan and Yang Su and Yichang Zhang and Yinger Zhang and Yu Wan and Yuqiong Liu and Zekun Wang and Zeyu Cui and Zhenru Zhang and Zhipeng Zhou and Zihan Qiu},
      year={2025},
      eprint={2505.09388},
      archivePrefix={arXiv},
      primaryClass={cs.CL},
      url={https://arxiv.org/abs/2505.09388}, 
}

@inproceedings{vllm,
author = {Kwon, Woosuk and Li, Zhuohan and Zhuang, Siyuan and Sheng, Ying and Zheng, Lianmin and Yu, Cody Hao and Gonzalez, Joseph and Zhang, Hao and Stoica, Ion},
title = {Efficient Memory Management for Large Language Model Serving with PagedAttention},
year = {2023},
isbn = {9798400702297},
publisher = {Association for Computing Machinery},
address = {New York, NY, USA},
url = {https://doi.org/10.1145/3600006.3613165},
doi = {10.1145/3600006.3613165},
booktitle = {Proceedings of the 29th Symposium on Operating Systems Principles},
pages = {611–626},
numpages = {16},
location = {Koblenz, Germany},
series = {SOSP '23}
}

@article{liu2025supporting,
  title={Supporting Our AI Overlords: Redesigning Data Systems to be Agent-First},
  author={Liu, Shu and Ponnapalli, Soujanya and Shankar, Shreya and Zeighami, Sepanta and Zhu, Alan and Agarwal, Shubham and Chen, Ruiqi and Suwito, Samion and Yuan, Shuo and Stoica, Ion and Zaharia, Matei and Cheung, Alvin and Crooks, Natacha and Gonzalez, Joseph E. and Parameswaran, Aditya G.},
  journal={arXiv preprint arXiv:2509.00997},
  year={2025}
}

@misc{vllm-github,
  title={{vLLM: Easy, Fast, and Cheap LLM Serving}},
  author={{vLLM Team}},
  howpublished={\url{https://github.com/vllm-project/vllm}},
  year={2023},
  note={Accessed: 2025-09-15}
}

@misc{langchain,
  author    = {LangChain},
  title     = {LangChain: Building applications with LLMs through composable abstractions},
  howpublished = {\url{https://python.langchain.com/docs/introduction/}},
  year      = {[n.d.]},
  note      = {Accessed: 2025-09-16}
}

@misc{langgraph,
  author    = {LangChain},
  title     = {LangGraph: Graph-based orchestration for LLM workflows},
  howpublished = {\url{https://langchain-ai.github.io/langgraph/}},
  year      = {[n.d.]},
  note      = {Accessed: 2025-09-16}
}

@article{agentscope,
  author    = {Dawei Gao and Zitao Li and Xuchen Pan and Weirui Kuang and Zhijian Ma and Bingchen Qian and Fei Wei and Wenhao Zhang and Yuexiang Xie and Daoyuan Chen and Liuyi Yao and Hongyi Peng and Zeyu Zhang and Lin Zhu and Chen Cheng and Hongzhu Shi and Yaliang Li and Bolin Ding and Jingren Zhou},
  title     = {AgentScope: A Flexible yet Robust Multi-Agent Platform},
  journal   = {arXiv preprint arXiv:2402.14034},
  year      = {2024},
  url       = {https://arxiv.org/abs/2402.14034}
}

@inproceedings{liu2024agentic,
  author    = {Yilun Du and Shuang Li and Antonio Torralba and Joshua B. Tenenbaum and Igor Mordatch},
  title     = {Improving factuality and reasoning in language models through multi-agent debate},
  booktitle = {Proceedings of the 41st International Conference on Machine Learning (ICML)},
  year      = {2024},
  url       = {https://arxiv.org/abs/2402.14034}
}

@inproceedings{sglang,
author = {Zheng, Lianmin and Yin, Liangsheng and Xie, Zhiqiang and Sun, Chuyue and Huang, Jeff and Yu, Cody Hao and Cao, Shiyi and Kozyrakis, Christos and Stoica, Ion and Gonzalez, Joseph E. and Barrett, Clark and Sheng, Ying},
title = {SGLang: efficient execution of structured language model programs},
year = {2025},
isbn = {9798331314385},
publisher = {Curran Associates Inc.},
address = {Red Hook, NY, USA},
booktitle = {Proceedings of the 38th International Conference on Neural Information Processing Systems},
articleno = {2000},
numpages = {27},
location = {Vancouver, BC, Canada},
series = {NIPS '24}
}

@inproceedings{critical-path,
author = {Kelley, James E. and Walker, Morgan R.},
title = {Critical-path planning and scheduling},
year = {1959},
isbn = {9781450378680},
publisher = {Association for Computing Machinery},
address = {New York, NY, USA},
url = {https://doi.org/10.1145/1460299.1460318},
doi = {10.1145/1460299.1460318},
booktitle = {Papers Presented at the December 1-3, 1959, Eastern Joint IRE-AIEE-ACM Computer Conference},
pages = {160–173},
numpages = {14},
location = {Boston, Massachusetts},
series = {IRE-AIEE-ACM '59 (Eastern)}
}

@inproceedings{cache-blend,
  author = {Yao, Jiayi and Li, Hanchen and Liu, Yuhan and Ray, Siddhant and Cheng, Yihua and Zhang, Qizheng and Du, Kuntai and Lu, Shan and Jiang, Junchen},
  title = {CacheBlend: Fast Large Language Model Serving for RAG with Cached Knowledge Fusion},
  year = {2025},
  url = {https://doi.org/10.1145/3689031.3696098},
  doi = {10.1145/3689031.3696098},
  booktitle = {Proceedings of the Twentieth European Conference on Computer Systems},
  pages = {94–109},
}

@ARTICLE{llm-mas-survey,
  title     = "A survey on {LLM-based} multi-agent systems: workflow,
               infrastructure, and challenges",
  author    = "Li, Xinyi and Wang, Sai and Zeng, Siqi and Wu, Yu and Yang, Yi",
  journal   = "Vicinagearth",
  publisher = "Springer Science and Business Media LLC",
  volume    =  1,
  number    =  1,
  month     =  oct,
  year      =  2024,
  copyright = "https://creativecommons.org/licenses/by/4.0",
  language  = "en"
}

@inproceedings{
metagpt,
title={Meta{GPT}: Meta Programming for A Multi-Agent Collaborative Framework},
author={Sirui Hong and Mingchen Zhuge and Jonathan Chen and Xiawu Zheng and Yuheng Cheng and Jinlin Wang and Ceyao Zhang and Zili Wang and Steven Ka Shing Yau and Zijuan Lin and Liyang Zhou and Chenyu Ran and Lingfeng Xiao and Chenglin Wu and J{\"u}rgen Schmidhuber},
booktitle={The Twelfth International Conference on Learning Representations},
year={2024},
url={https://openreview.net/forum?id=VtmBAGCN7o}
}

@inproceedings{chatdev,
    title = "{C}hat{D}ev: Communicative Agents for Software Development",
    author = "Qian, Chen  and
      Liu, Wei  and
      Liu, Hongzhang  and
      Chen, Nuo  and
      Dang, Yufan  and
      Li, Jiahao  and
      Yang, Cheng  and
      Chen, Weize  and
      Su, Yusheng  and
      Cong, Xin  and
      Xu, Juyuan  and
      Li, Dahai  and
      Liu, Zhiyuan  and
      Sun, Maosong",
    editor = "Ku, Lun-Wei  and
      Martins, Andre  and
      Srikumar, Vivek",
    booktitle = "Proceedings of the 62nd Annual Meeting of the Association for Computational Linguistics (Volume 1: Long Papers)",
    month = aug,
    year = "2024",
    address = "Bangkok, Thailand",
    publisher = "Association for Computational Linguistics",
    url = "https://aclanthology.org/2024.acl-long.810/",
    doi = "10.18653/v1/2024.acl-long.810",
    pages = "15174--15186"
}

@inproceedings{generative-agents,
author = {Park, Joon Sung and O'Brien, Joseph and Cai, Carrie Jun and Morris, Meredith Ringel and Liang, Percy and Bernstein, Michael S.},
title = {Generative Agents: Interactive Simulacra of Human Behavior},
year = {2023},
isbn = {9798400701320},
publisher = {Association for Computing Machinery},
address = {New York, NY, USA},
url = {https://doi.org/10.1145/3586183.3606763},
doi = {10.1145/3586183.3606763},
booktitle = {Proceedings of the 36th Annual ACM Symposium on User Interface Software and Technology},
articleno = {2},
numpages = {22},
location = {San Francisco, CA, USA},
series = {UIST '23}
}

@inproceedings{econagent,
    title = "{E}con{A}gent: Large Language Model-Empowered Agents for Simulating Macroeconomic Activities",
    author = "Li, Nian  and
      Gao, Chen  and
      Li, Mingyu  and
      Li, Yong  and
      Liao, Qingmin",
    editor = "Ku, Lun-Wei  and
      Martins, Andre  and
      Srikumar, Vivek",
    booktitle = "Proceedings of the 62nd Annual Meeting of the Association for Computational Linguistics (Volume 1: Long Papers)",
    month = aug,
    year = "2024",
    address = "Bangkok, Thailand",
    publisher = "Association for Computational Linguistics",
    url = "https://aclanthology.org/2024.acl-long.829/",
    doi = "10.18653/v1/2024.acl-long.829",
    pages = "15523--15536"
}

@misc{asfm,
      title={Simulating Financial Market via Large Language Model based Agents}, 
      author={Shen Gao and Yuntao Wen and Minghang Zhu and Jianing Wei and Yuhan Cheng and Qunzi Zhang and Shuo Shang},
      year={2024},
      eprint={2406.19966},
      archivePrefix={arXiv},
      primaryClass={cs.CL},
      url={https://arxiv.org/abs/2406.19966}, 
}

@inproceedings{cryptotrade,
    title = "{C}rypto{T}rade: A Reflective {LLM}-based Agent to Guide Zero-shot Cryptocurrency Trading",
    author = "Li, Yuan  and
      Luo, Bingqiao  and
      Wang, Qian  and
      Chen, Nuo  and
      Liu, Xu  and
      He, Bingsheng",
    editor = "Al-Onaizan, Yaser  and
      Bansal, Mohit  and
      Chen, Yun-Nung",
    booktitle = "Proceedings of the 2024 Conference on Empirical Methods in Natural Language Processing",
    month = nov,
    year = "2024",
    address = "Miami, Florida, USA",
    publisher = "Association for Computational Linguistics",
    url = "https://aclanthology.org/2024.emnlp-main.63/",
    doi = "10.18653/v1/2024.emnlp-main.63",
    pages = "1094--1106"
}

@inproceedings{
tradingagents,
title={TradingAgents: Multi-Agents {LLM} Financial Trading Framework},
author={Yijia Xiao and Edward Sun and Di Luo and Wei Wang},
booktitle={The First MARW: Multi-Agent AI in the Real World Workshop at AAAI 2025},
year={2025},
url={https://openreview.net/forum?id=4QPrXwMQt1}
}

@inproceedings{elliottagents,
    title = "{E}lliott{A}gents: A Natural Language-Driven Multi-Agent System for Stock Market Analysis and Prediction",
    author = "Chudziak, Jaroslaw A.  and
      Wawer, Michal",
    editor = "Oco, Nathaniel  and
      Dita, Shirley N.  and
      Borlongan, Ariane Macalinga  and
      Kim, Jong-Bok",
    booktitle = "Proceedings of the 38th Pacific Asia Conference on Language, Information and Computation",
    month = dec,
    year = "2024",
    address = "Tokyo, Japan",
    publisher = "Tokyo University of Foreign Studies",
    url = "https://aclanthology.org/2024.paclic-1.91/",
    pages = "961--970"
}

@inproceedings{fin-vision,
author = {Fatemi, Sorouralsadat and Hu, Yuheng},
title = {FinVision: A Multi-Agent Framework for Stock Market Prediction},
year = {2024},
isbn = {9798400710810},
publisher = {Association for Computing Machinery},
address = {New York, NY, USA},
url = {https://doi.org/10.1145/3677052.3698688},
doi = {10.1145/3677052.3698688},
booktitle = {Proceedings of the 5th ACM International Conference on AI in Finance},
pages = {582–590},
numpages = {9},
location = {Brooklyn, NY, USA},
series = {ICAIF '24}
}

@inproceedings{
autogen,
title={AutoGen: Enabling Next-Gen {LLM} Applications via Multi-Agent Conversations},
author={Qingyun Wu and Gagan Bansal and Jieyu Zhang and Yiran Wu and Beibin Li and Erkang Zhu and Li Jiang and Xiaoyun Zhang and Shaokun Zhang and Jiale Liu and Ahmed Hassan Awadallah and Ryen W White and Doug Burger and Chi Wang},
booktitle={First Conference on Language Modeling},
year={2024},
url={https://openreview.net/forum?id=BAakY1hNKS}
}

@inbook{ayo,
author = {Tan, Xin and Jiang, Yimin and Yang, Yitao and Xu, Hong},
title = {Towards End-to-End Optimization of LLM-based Applications with Ayo},
year = {2025},
isbn = {9798400710797},
publisher = {Association for Computing Machinery},
address = {New York, NY, USA},
url = {https://doi.org/10.1145/3676641.3716278},
booktitle = {Proceedings of the 30th ACM International Conference on Architectural Support for Programming Languages and Operating Systems, Volume 2},
pages = {1302–1316},
numpages = {15}
}

@misc{autellix,
      title={Autellix: An Efficient Serving Engine for LLM Agents as General Programs}, 
      author={Michael Luo and Xiaoxiang Shi and Colin Cai and Tianjun Zhang and Justin Wong and Yichuan Wang and Chi Wang and Yanping Huang and Zhifeng Chen and Joseph E. Gonzalez and Ion Stoica},
      year={2025},
      eprint={2502.13965},
      archivePrefix={arXiv},
      primaryClass={cs.LG},
      url={https://arxiv.org/abs/2502.13965}, 
}

@misc{kvflow,
      title={KVFlow: Efficient Prefix Caching for Accelerating LLM-Based Multi-Agent Workflows}, 
      author={Zaifeng Pan and Ajjkumar Patel and Zhengding Hu and Yipeng Shen and Yue Guan and Wan-Lu Li and Lianhui Qin and Yida Wang and Yufei Ding},
      year={2025},
      eprint={2507.07400},
      archivePrefix={arXiv},
      primaryClass={cs.DC},
      url={https://arxiv.org/abs/2507.07400}, 
}

@inproceedings {orca,
author = {Gyeong-In Yu and Joo Seong Jeong and Geon-Woo Kim and Soojeong Kim and Byung-Gon Chun},
title = {Orca: A Distributed Serving System for {Transformer-Based} Generative Models},
booktitle = {16th USENIX Symposium on Operating Systems Design and Implementation (OSDI 22)},
year = {2022},
isbn = {978-1-939133-28-1},
address = {Carlsbad, CA},
pages = {521--538},
url = {https://www.usenix.org/conference/osdi22/presentation/yu},
publisher = {USENIX Association},
month = jul
}

@inproceedings{flashattention,
 author = {Dao, Tri and Fu, Dan and Ermon, Stefano and Rudra, Atri and R\'{e}, Christopher},
 booktitle = {Advances in Neural Information Processing Systems},
 editor = {S. Koyejo and S. Mohamed and A. Agarwal and D. Belgrave and K. Cho and A. Oh},
 pages = {16344--16359},
 publisher = {Curran Associates, Inc.},
 title = {FlashAttention: Fast and Memory-Efficient Exact Attention with IO-Awareness},
 url = {https://proceedings.neurips.cc/paper_files/paper/2022/file/67d57c32e20fd0a7a302cb81d36e40d5-Paper-Conference.pdf},
 volume = {35},
 year = {2022}
}

@inproceedings{flashattention-2,
title={FlashAttention-2: Faster Attention with Better Parallelism and Work Partitioning},
author={Tri Dao},
booktitle={The Twelfth International Conference on Learning Representations},
year={2024},
url={https://openreview.net/forum?id=mZn2Xyh9Ec}
}

@inproceedings{flashattention-3,
 author = {Shah, Jay and Bikshandi, Ganesh and Zhang, Ying and Thakkar, Vijay and Ramani, Pradeep and Dao, Tri},
 booktitle = {Advances in Neural Information Processing Systems},
 editor = {A. Globerson and L. Mackey and D. Belgrave and A. Fan and U. Paquet and J. Tomczak and C. Zhang},
 pages = {68658--68685},
 publisher = {Curran Associates, Inc.},
 title = {FlashAttention-3: Fast and Accurate Attention with Asynchrony and Low-precision},
 url = {https://proceedings.neurips.cc/paper_files/paper/2024/file/7ede97c3e082c6df10a8d6103a2eebd2-Paper-Conference.pdf},
 volume = {37},
 year = {2024}
}

@article{flashinfer,
  publtype={informal},
  author={Zihao Ye and Lequn Chen and Ruihang Lai and Wuwei Lin and Yineng Zhang and Stephanie Wang and Tianqi Chen and Baris Kasikci and Vinod Grover and Arvind Krishnamurthy and Luis Ceze},
  title={FlashInfer: Efficient and Customizable Attention Engine for LLM Inference Serving},
  year={2025},
  month={January},
  cdate={1735689600000},
  journal={CoRR},
  volume={abs/2501.01005},
  url={https://doi.org/10.48550/arXiv.2501.01005}
}

@inproceedings{
fasttree,
title={FastTree: Optimizing Attention Kernel and Runtime for Tree-Structured {LLM} Inference},
author={Zaifeng Pan and Yitong Ding and Yue Guan and Zheng Wang and Zhongkai Yu and Xulong Tang and Yida Wang and Yufei Ding},
booktitle={Eighth Conference on Machine Learning and Systems},
year={2025},
url={https://openreview.net/forum?id=BwvHcHZ3kJ}
}

@inbook{pod-attention,
author = {Kamath, Aditya K. and Prabhu, Ramya and Mohan, Jayashree and Peter, Simon and Ramjee, Ramachandran and Panwar, Ashish},
title = {POD-Attention: Unlocking Full Prefill-Decode Overlap for Faster LLM Inference},
year = {2025},
isbn = {9798400710797},
publisher = {Association for Computing Machinery},
address = {New York, NY, USA},
url = {https://doi.org/10.1145/3676641.3715996},
booktitle = {Proceedings of the 30th ACM International Conference on Architectural Support for Programming Languages and Operating Systems, Volume 2},
pages = {897–912},
numpages = {16}
}

@inproceedings{vattention,
author = {Prabhu, Ramya and Nayak, Ajay and Mohan, Jayashree and Ramjee, Ramachandran and Panwar, Ashish},
title = {vAttention: Dynamic Memory Management for Serving LLMs without PagedAttention},
year = {2025},
isbn = {9798400706981},
publisher = {Association for Computing Machinery},
address = {New York, NY, USA},
url = {https://doi.org/10.1145/3669940.3707256},
doi = {10.1145/3669940.3707256},
booktitle = {Proceedings of the 30th ACM International Conference on Architectural Support for Programming Languages and Operating Systems, Volume 1},
pages = {1133–1150},
numpages = {18},
location = {Rotterdam, Netherlands},
series = {ASPLOS '25}
}

@inproceedings{sarathi-serve,
author = {Agrawal, Amey and Kedia, Nitin and Panwar, Ashish and Mohan, Jayashree and Kwatra, Nipun and Gulavani, Bhargav S. and Tumanov, Alexey and Ramjee, Ramachandran},
title = {Taming throughput-latency tradeoff in LLM inference with sarathi-serve},
year = {2024},
isbn = {978-1-939133-40-3},
publisher = {USENIX Association},
address = {USA},
booktitle = {Proceedings of the 18th USENIX Conference on Operating Systems Design and Implementation},
articleno = {7},
numpages = {18},
location = {Santa Clara, CA, USA},
series = {OSDI'24}
}

@INPROCEEDINGS{splitwise,
  author={Patel, Pratyush and Choukse, Esha and Zhang, Chaojie and Shah, Aashaka and Goiri, Íñigo and Maleki, Saeed and Bianchini, Ricardo},
  booktitle={2024 ACM/IEEE 51st Annual International Symposium on Computer Architecture (ISCA)}, 
  title={Splitwise: Efficient Generative LLM Inference Using Phase Splitting}, 
  year={2024},
  volume={},
  number={},
  pages={118-132},
  doi={10.1109/ISCA59077.2024.00019}}

@inproceedings {distserve,
author = {Yinmin Zhong and Shengyu Liu and Junda Chen and Jianbo Hu and Yibo Zhu and Xuanzhe Liu and Xin Jin and Hao Zhang},
title = {{DistServe}: Disaggregating Prefill and Decoding for Goodput-optimized Large Language Model Serving},
booktitle = {18th USENIX Symposium on Operating Systems Design and Implementation (OSDI 24)},
year = {2024},
isbn = {978-1-939133-40-3},
address = {Santa Clara, CA},
pages = {193--210},
url = {https://www.usenix.org/conference/osdi24/presentation/zhong-yinmin},
publisher = {USENIX Association},
month = jul
}

@misc{batchllm,
      title={BatchLLM: Optimizing Large Batched LLM Inference with Global Prefix Sharing and Throughput-oriented Token Batching}, 
      author={Zhen Zheng and Xin Ji and Taosong Fang and Fanghao Zhou and Chuanjie Liu and Gang Peng},
      year={2025},
      eprint={2412.03594},
      archivePrefix={arXiv},
      primaryClass={cs.CL},
      url={https://arxiv.org/abs/2412.03594}, 
}

@misc{megatron-lm,
      title={Megatron-LM: Training Multi-Billion Parameter Language Models Using Model Parallelism}, 
      author={Mohammad Shoeybi and Mostofa Patwary and Raul Puri and Patrick LeGresley and Jared Casper and Bryan Catanzaro},
      year={2020},
      eprint={1909.08053},
      archivePrefix={arXiv},
      primaryClass={cs.CL},
      url={https://arxiv.org/abs/1909.08053}, 
}

@inproceedings{chunkattention,
    title = "{C}hunk{A}ttention: Efficient Self-Attention with Prefix-Aware {KV} Cache and Two-Phase Partition",
    author = "Ye, Lu  and
      Tao, Ze  and
      Huang, Yong  and
      Li, Yang",
    editor = "Ku, Lun-Wei  and
      Martins, Andre  and
      Srikumar, Vivek",
    booktitle = "Proceedings of the 62nd Annual Meeting of the Association for Computational Linguistics (Volume 1: Long Papers)",
    month = aug,
    year = "2024",
    address = "Bangkok, Thailand",
    publisher = "Association for Computational Linguistics",
    url = "https://aclanthology.org/2024.acl-long.623/",
    doi = "10.18653/v1/2024.acl-long.623",
    pages = "11608--11620"
}

@inproceedings{prompt-cache,
 author = {Gim, In and Chen, Guojun and Lee, Seung-seob and Sarda, Nikhil and Khandelwal, Anurag and Zhong, Lin},
 booktitle = {Proceedings of Machine Learning and Systems},
 editor = {P. Gibbons and G. Pekhimenko and C. De Sa},
 pages = {325--338},
 title = {Prompt Cache: Modular Attention Reuse for Low-Latency Inference},
 url = {https://proceedings.mlsys.org/paper_files/paper/2024/file/a66caa1703fe34705a4368c3014c1966-Paper-Conference.pdf},
 volume = {6},
 year = {2024}
}

@misc{databricks,
	author = {Databricks},
	title = {{D}atabricks: {L}eading {D}ata and {A}{I} {S}olutions for {E}nterprises --- databricks.com},
	howpublished = {\url{https://www.databricks.com/}},
	year = {},
	note = {[Accessed 22-09-2025]},
}

@article{palimpzest,
  publtype={informal},
  author={Chunwei Liu and Matthew Russo and Michael J. Cafarella and Lei Cao and Peter Baile Chen and Zui Chen and Michael J. Franklin and Tim Kraska and Samuel Madden and Gerardo Vitagliano},
  title={A Declarative System for Optimizing AI Workloads},
  year={2024},
  cdate={1704067200000},
  journal={CoRR},
  volume={abs/2405.14696},
  url={https://doi.org/10.48550/arXiv.2405.14696}
}

@inproceedings{
optimizing-llm-queries,
title={Optimizing {LLM} Queries in Relational Data Analytics Workloads},
author={Shu Liu and Asim Biswal and Amog Kamsetty and Audrey Cheng and Luis Gaspar Schroeder and Liana Patel and Shiyi Cao and Xiangxi Mo and Ion Stoica and Joseph E. Gonzalez and Matei Zaharia},
booktitle={Eighth Conference on Machine Learning and Systems},
year={2025},
url={https://openreview.net/forum?id=R7bK9yycHp}
}

@inproceedings{infinigen,
author = {Wonbeom Lee and Jungi Lee and Junghwan Seo and Jaewoong Sim},
title = {{InfiniGen}: Efficient Generative Inference of Large Language Models with Dynamic {KV} Cache Management},
booktitle = {18th USENIX Symposium on Operating Systems Design and Implementation (OSDI 24)},
year = {2024},
isbn = {978-1-939133-40-3},
address = {Santa Clara, CA},
pages = {155--172},
url = {https://www.usenix.org/conference/osdi24/presentation/lee},
publisher = {USENIX Association},
month = jul
}

@MISC{harris2002trade-behavior,
  title     = "Trading and exchanges",
  author    = "Harris, Larry",
  publisher = "Oxford University PressNew York, NY",
  month     =  oct,
  year      =  2002
}

@misc{finnhub, title={Finnhub - free realtime apis for stock, Forex and cryptocurrency.}, url={https://finnhub.io/}, journal={Finnhub Stock APIs - Real-time stock prices, Company fundamentals, Estimates, and Alternative data.}, author={Finnhub.io}}

@misc{yfin, url={https://finance.yahoo.com/}, journal={Yahoo! Finance}, publisher={Yahoo!}, author={Yahoo!}}

@misc{reddit-finance, title={Winddude/reddit\_finance\_43\_250k · datasets at hugging face}, url={https://huggingface.co/datasets/winddude/reddit\_finance\_43\_250k}, journal={winddude/reddit\_finance\_43\_250k · Datasets at Hugging Face}, author={Stewart, Lawrence}}

@inproceedings{rocklin2015dask,
  author = {Rocklin, Matthew},
  booktitle = {SciPy},
  editor = {Huff, Kathryn and Bergstra, James},
  ee = {https://doi.org/10.25080/Majora-7b98e3ed-013},
  pages = {126-132},
  publisher = {scipy.org},
  title = {Dask: Parallel Computation with Blocked algorithms and Task Scheduling.},
  url = {http://dblp.uni-trier.de/db/conf/scipy/scipy2015.html#Rocklin15},
  year = 2015
}

@incollection{machine-scheduling,
title = {Complexity of Machine Scheduling Problems},
editor = {P.L. Hammer and E.L. Johnson and B.H. Korte and G.L. Nemhauser},
series = {Annals of Discrete Mathematics},
publisher = {Elsevier},
volume = {1},
pages = {343-362},
year = {1977},
booktitle = {Studies in Integer Programming},
issn = {0167-5060},
doi = {https://doi.org/10.1016/S0167-5060(08)70743-X},
url = {https://www.sciencedirect.com/science/article/pii/S016750600870743X},
author = {J.K. Lenstra and A.H.G. {Rinnooy Kan} and P. Brucker}
}

@misc{tensorflow,
title={ {TensorFlow}: Large-Scale Machine Learning on Heterogeneous Systems},
url={https://www.tensorflow.org/},
note={Software available from tensorflow.org},
author={
    Mart\'{i}n~Abadi and
    Ashish~Agarwal and
    Paul~Barham and
    Eugene~Brevdo and
    Zhifeng~Chen and
    Craig~Citro and
    Greg~S.~Corrado and
    Andy~Davis and
    Jeffrey~Dean and
    Matthieu~Devin and
    Sanjay~Ghemawat and
    Ian~Goodfellow and
    Andrew~Harp and
    Geoffrey~Irving and
    Michael~Isard and
    Yangqing Jia and
    Rafal~Jozefowicz and
    Lukasz~Kaiser and
    Manjunath~Kudlur and
    Josh~Levenberg and
    Dandelion~Man\'{e} and
    Rajat~Monga and
    Sherry~Moore and
    Derek~Murray and
    Chris~Olah and
    Mike~Schuster and
    Jonathon~Shlens and
    Benoit~Steiner and
    Ilya~Sutskever and
    Kunal~Talwar and
    Paul~Tucker and
    Vincent~Vanhoucke and
    Vijay~Vasudevan and
    Fernanda~Vi\'{e}gas and
    Oriol~Vinyals and
    Pete~Warden and
    Martin~Wattenberg and
    Martin~Wicke and
    Yuan~Yu and
    Xiaoqiang~Zheng},
  year={2015},
}

@article{luo2025agent-survey,
  publtype={informal},
  author={Junyu Luo and Weizhi Zhang and Ye Yuan and Yusheng Zhao and Junwei Yang and Yiyang Gu and Bohan Wu and Binqi Chen and Ziyue Qiao and Qingqing Long and Rongcheng Tu and Xiao Luo and Wei Ju and Zhiping Xiao and Yifan Wang and Meng Xiao and Chenwu Liu and Jingyang Yuan and Shichang Zhang and Yiqiao Jin and Fan Zhang and Xian Wu and Hanqing Zhao and Dacheng Tao and Philip S. Yu and Ming Zhang},
  title={Large Language Model Agent: A Survey on Methodology, Applications and Challenges},
  year={2025},
  month={March},
  cdate={1740787200000},
  journal={CoRR},
  volume={abs/2503.21460},
  url={https://doi.org/10.48550/arXiv.2503.21460}
}

@article{xi2025rise-agent,
  author={Zhiheng Xi and Wenxiang Chen and Xin Guo and Wei He and Yiwen Ding and Boyang Hong and Ming Zhang and Junzhe Wang and Senjie Jin and Enyu Zhou and Rui Zheng and Xiaoran Fan and Xiao Wang and Limao Xiong and Yuhao Zhou and Weiran Wang and Changhao Jiang and Yicheng Zou and Xiangyang Liu and Zhangyue Yin and Shihan Dou and Rongxiang Weng and Wenjuan Qin and Yongyan Zheng and Xipeng Qiu and Xuanjing Huang and Qi Zhang and Tao Gui},
  title={The rise and potential of large language model based agents: a survey},
  year={2025},
  cdate={1735689600000},
  journal={Sci. China Inf. Sci.},
  volume={68},
  number={2},
  url={https://doi.org/10.1007/s11432-024-4222-0}
}

@article{lei2024autonomous-agents,
  author={Lei Wang and Chen Ma and Xueyang Feng and Zeyu Zhang and Hao Yang and Jingsen Zhang and Zhiyuan Chen and Jiakai Tang and Xu Chen and Yankai Lin and Wayne Xin Zhao and Zhewei Wei and Jirong Wen},
  title={A survey on large language model based autonomous agents},
  year={2024},
  month={December},
  cdate={1733011200000},
  journal={Frontiers Comput. Sci.},
  volume={18},
  number={6},
  pages={186345},
  url={https://doi.org/10.1007/s11704-024-40231-1}
}

@INPROCEEDINGS{singh2024enhancing-workflows,
  author={Singh, Aditi and Ehtesham, Abul and Kumar, Saket and Khoei, Tala Talaei},
  booktitle={2024 IEEE World AI IoT Congress (AIIoT)}, 
  title={Enhancing AI Systems with Agentic Workflows Patterns in Large Language Model}, 
  year={2024},
  volume={},
  number={},
  pages={527-532},
  doi={10.1109/AIIoT61789.2024.10578990}}

@inproceedings{
zhang2025aflow,
title={{AF}low: Automating Agentic Workflow Generation},
author={Jiayi Zhang and Jinyu Xiang and Zhaoyang Yu and Fengwei Teng and Xiong-Hui Chen and Jiaqi Chen and Mingchen Zhuge and Xin Cheng and Sirui Hong and Jinlin Wang and Bingnan Zheng and Bang Liu and Yuyu Luo and Chenglin Wu},
booktitle={The Thirteenth International Conference on Learning Representations},
year={2025},
url={https://openreview.net/forum?id=z5uVAKwmjf}
}

@INPROCEEDINGS{yu2025workflow-survey,
  author={Yu, Chaojia and Cheng, Zihan and Cui, Hanwen and Gao, Yishuo and Luo, Zexu and Wang, Yijin and Zheng, Hangbin and Zhao, Yong},
  booktitle={2025 8th International Conference on Artificial Intelligence and Big Data (ICAIBD)}, 
  title={A Survey on Agent Workflow – Status and Future}, 
  year={2025},
  volume={},
  number={},
  pages={770-781},
  doi={10.1109/ICAIBD64986.2025.11082076}}

@article{sapkota2025agents-vs-agentic,
   title={AI Agents vs. Agentic AI: A Conceptual taxonomy, applications and challenges},
   volume={126},
   ISSN={1566-2535},
   url={http://dx.doi.org/10.1016/j.inffus.2025.103599},
   DOI={10.1016/j.inffus.2025.103599},
   journal={Information Fusion},
   publisher={Elsevier BV},
   author={Sapkota, Ranjan and Roumeliotis, Konstantinos I. and Karkee, Manoj},
   year={2026},
   month=feb, pages={103599} }

@misc{asgar2025agentic-hetero,
      title={Efficient and Scalable Agentic AI with Heterogeneous Systems}, 
      author={Zain Asgar and Michelle Nguyen and Sachin Katti},
      year={2025},
      eprint={2507.19635},
      archivePrefix={arXiv},
      primaryClass={cs.LG},
      url={https://arxiv.org/abs/2507.19635}, 
}

@misc{xi2025deepsearch-survey,
      title={A Survey of LLM-based Deep Search Agents: Paradigm, Optimization, Evaluation, and Challenges}, 
      author={Yunjia Xi and Jianghao Lin and Yongzhao Xiao and Zheli Zhou and Rong Shan and Te Gao and Jiachen Zhu and Weiwen Liu and Yong Yu and Weinan Zhang},
      year={2025},
      eprint={2508.05668},
      archivePrefix={arXiv},
      primaryClass={cs.IR},
      url={https://arxiv.org/abs/2508.05668}, 
}

@misc{tensorrt-llm, title={Welcome to TENSORRT LLM’s documentation!}, url={https://nvidia.github.io/TensorRT-LLM/}, journal={Welcome to TensorRT LLM’s Documentation! - TensorRT LLM}, author={NVIDIA}}

@inproceedings{
wang2023selfconsistency,
title={Self-Consistency Improves Chain of Thought Reasoning in Language Models},
author={Xuezhi Wang and Jason Wei and Dale Schuurmans and Quoc V Le and Ed H. Chi and Sharan Narang and Aakanksha Chowdhery and Denny Zhou},
booktitle={The Eleventh International Conference on Learning Representations },
year={2023},
url={https://openreview.net/forum?id=1PL1NIMMrw}
}

@inproceedings{yao2023tree-of-thoughts,
 author = {Yao, Shunyu and Yu, Dian and Zhao, Jeffrey and Shafran, Izhak and Griffiths, Tom and Cao, Yuan and Narasimhan, Karthik},
 booktitle = {Advances in Neural Information Processing Systems},
 editor = {A. Oh and T. Naumann and A. Globerson and K. Saenko and M. Hardt and S. Levine},
 pages = {11809--11822},
 publisher = {Curran Associates, Inc.},
 title = {Tree of Thoughts: Deliberate Problem Solving with Large Language Models},
 url = {https://proceedings.neurips.cc/paper_files/paper/2023/file/271db9922b8d1f4dd7aaef84ed5ac703-Paper-Conference.pdf},
 volume = {36},
 year = {2023}
}

@inproceedings{liang2024mad,
    title = "Encouraging Divergent Thinking in Large Language Models through Multi-Agent Debate",
    author = "Liang, Tian  and
      He, Zhiwei  and
      Jiao, Wenxiang  and
      Wang, Xing  and
      Wang, Yan  and
      Wang, Rui  and
      Yang, Yujiu  and
      Shi, Shuming  and
      Tu, Zhaopeng",
    editor = "Al-Onaizan, Yaser  and
      Bansal, Mohit  and
      Chen, Yun-Nung",
    booktitle = "Proceedings of the 2024 Conference on Empirical Methods in Natural Language Processing",
    month = nov,
    year = "2024",
    address = "Miami, Florida, USA",
    publisher = "Association for Computational Linguistics",
    url = "https://aclanthology.org/2024.emnlp-main.992/",
    doi = "10.18653/v1/2024.emnlp-main.992",
    pages = "17889--17904"
}

@article{hellerstein2007db-architecture,
author = {Hellerstein, Joseph M. and Stonebraker, Michael and Hamilton, James},
title = {Architecture of a Database System},
year = {2007},
issue_date = {February 2007},
publisher = {Now Publishers Inc.},
address = {Hanover, MA, USA},
volume = {1},
number = {2},
issn = {1931-7883},
url = {https://doi.org/10.1561/1900000002},
doi = {10.1561/1900000002},
journal = {Found. Trends Databases},
month = feb,
pages = {141–259},
numpages = {119}
}

@article{goetz1993query-evaluation,
author = {Graefe, Goetz},
title = {Query evaluation techniques for large databases},
year = {1993},
issue_date = {June 1993},
publisher = {Association for Computing Machinery},
address = {New York, NY, USA},
volume = {25},
number = {2},
issn = {0360-0300},
url = {https://doi.org/10.1145/152610.152611},
doi = {10.1145/152610.152611},
journal = {ACM Comput. Surv.},
month = jun,
pages = {73–169},
numpages = {97}
}

@article{morrison1968patricia,
author = {Morrison, Donald R.},
title = {PATRICIA—Practical Algorithm To Retrieve Information Coded in Alphanumeric},
year = {1968},
issue_date = {Oct. 1968},
publisher = {Association for Computing Machinery},
address = {New York, NY, USA},
volume = {15},
number = {4},
issn = {0004-5411},
url = {https://doi.org/10.1145/321479.321481},
doi = {10.1145/321479.321481},
journal = {J. ACM},
month = oct,
pages = {514–534},
numpages = {21}
}

@inproceedings{besta2024graph-of-thoughts,
author = {Besta, Maciej and Blach, Nils and Kubicek, Ales and Gerstenberger, Robert and Podstawski, Micha\l{} and Gianinazzi, Lukas and Gajda, Joanna and Lehmann, Tomasz and Niewiadomski, Hubert and Nyczyk, Piotr and Hoefler, Torsten},
title = {Graph of thoughts: solving elaborate problems with large language models},
year = {2024},
isbn = {978-1-57735-887-9},
publisher = {AAAI Press},
url = {https://doi.org/10.1609/aaai.v38i16.29720},
doi = {10.1609/aaai.v38i16.29720},
booktitle = {Proceedings of the Thirty-Eighth AAAI Conference on Artificial Intelligence and Thirty-Sixth Conference on Innovative Applications of Artificial Intelligence and Fourteenth Symposium on Educational Advances in Artificial Intelligence},
articleno = {1972},
numpages = {9},
series = {AAAI'24/IAAI'24/EAAI'24}
}

@article{sellis1988optimization,
author = {Sellis, Timos K.},
title = {Multiple-query optimization},
year = {1988},
issue_date = {March 1988},
publisher = {Association for Computing Machinery},
address = {New York, NY, USA},
volume = {13},
number = {1},
issn = {0362-5915},
url = {https://doi.org/10.1145/42201.42203},
doi = {10.1145/42201.42203},
journal = {ACM Trans. Database Syst.},
month = mar,
pages = {23–52},
numpages = {30}
}

@article{graefe1994volcano,
author = {Graefe, G.},
title = {Volcano-An Extensible and Parallel Query Evaluation System},
year = {1994},
issue_date = {February 1994},
publisher = {IEEE Educational Activities Department},
address = {USA},
volume = {6},
number = {1},
issn = {1041-4347},
url = {https://doi.org/10.1109/69.273032},
doi = {10.1109/69.273032},
journal = {IEEE Trans. on Knowl. and Data Eng.},
month = feb,
pages = {120–135},
numpages = {16}
}

@inproceedings{leviathan2023specdec,
author = {Leviathan, Yaniv and Kalman, Matan and Matias, Yossi},
title = {Fast inference from transformers via speculative decoding},
year = {2023},
publisher = {JMLR.org},
booktitle = {Proceedings of the 40th International Conference on Machine Learning},
articleno = {795},
numpages = {13},
location = {Honolulu, Hawaii, USA},
series = {ICML'23}
}

@inproceedings{miao2024specinfer,
author = {Miao, Xupeng and Oliaro, Gabriele and Zhang, Zhihao and Cheng, Xinhao and Wang, Zeyu and Zhang, Zhengxin and Wong, Rae Ying Yee and Zhu, Alan and Yang, Lijie and Shi, Xiaoxiang and Shi, Chunan and Chen, Zhuoming and Arfeen, Daiyaan and Abhyankar, Reyna and Jia, Zhihao},
title = {SpecInfer: Accelerating Large Language Model Serving with Tree-based Speculative Inference and Verification},
year = {2024},
isbn = {9798400703867},
publisher = {Association for Computing Machinery},
address = {New York, NY, USA},
url = {https://doi.org/10.1145/3620666.3651335},
doi = {10.1145/3620666.3651335},
booktitle = {Proceedings of the 29th ACM International Conference on Architectural Support for Programming Languages and Operating Systems, Volume 3},
pages = {932–949},
numpages = {18},
location = {La Jolla, CA, USA},
series = {ASPLOS '24}
}

@inproceedings{cai2024medusa,
author = {Cai, Tianle and Li, Yuhong and Geng, Zhengyang and Peng, Hongwu and Lee, Jason D. and Chen, Deming and Dao, Tri},
title = {MEDUSA: Simple LLM inference acceleration framework with multiple decoding heads},
year = {2024},
publisher = {JMLR.org},
booktitle = {Proceedings of the 41st International Conference on Machine Learning},
articleno = {203},
numpages = {27},
location = {Vienna, Austria},
series = {ICML'24}
}

@misc{shen2025hybriddrafting,
      title={Speculative Decoding via Hybrid Drafting and Rollback-Aware Branch Parallelism}, 
      author={Yuhao Shen and Junyi Shen and Quan Kong and Tianyu Liu and Yao Lu and Cong Wang},
      year={2025},
      eprint={2506.01979},
      archivePrefix={arXiv},
      primaryClass={cs.DC},
      url={https://arxiv.org/abs/2506.01979}, 
}

@misc{shen2025halo,
      title={Batch Query Processing and Optimization for Agentic Workflows}, 
      author={Junyi Shen and Noppanat Wadlom and Yao Lu},
      year={2025},
      eprint={2509.02121},
      archivePrefix={arXiv},
      primaryClass={cs.DB},
      url={https://arxiv.org/abs/2509.02121}, 
}

@MISC{forrest2024cbc,
  title     = "{Coin-or/CBC}: Release releases/2.10.12",
  author    = "Forrest, John and Ralphs, Ted and Vigerske, Stefan and Santos,
               Haroldo Gambini and Forrest, John and Hafer, Lou and
               Kristjansson, Bjarni and {jpfasano} and {EdwinStraver} and
               {Jan-Willem} and Lubin, Miles and {rlougee} and {a-andre} and
               {jpgoncal} and Brito, Samuel and {h-i-gassmann} and {Cristina}
               and Saltzman, Matthew and {tosttost} and Pitrus, Bruno and
               Matsushima, Fumiaki and Vossler, Patrick and Swgy, Ron @ and
               {to-st}",
  publisher = "Zenodo",
  year      =  2024
}

@inproceedings {abadi2016tensorflow,
author = {Mart{\'\i}n Abadi and Paul Barham and Jianmin Chen and Zhifeng Chen and Andy Davis and Jeffrey Dean and Matthieu Devin and Sanjay Ghemawat and Geoffrey Irving and Michael Isard and Manjunath Kudlur and Josh Levenberg and Rajat Monga and Sherry Moore and Derek G. Murray and Benoit Steiner and Paul Tucker and Vijay Vasudevan and Pete Warden and Martin Wicke and Yuan Yu and Xiaoqiang Zheng},
title = {{TensorFlow}: A System for {Large-Scale} Machine Learning},
booktitle = {12th USENIX Symposium on Operating Systems Design and Implementation (OSDI 16)},
year = {2016},
isbn = {978-1-931971-33-1},
address = {Savannah, GA},
pages = {265--283},
url = {https://www.usenix.org/conference/osdi16/technical-sessions/presentation/abadi},
publisher = {USENIX Association},
month = nov
}

@misc{airflow,
url={https://airflow.apache.org/},
journal={Apache Airflow},
year={}
}

@inproceedings{he2025resource,
author = {He, Yongjun and Yang, Haofeng and Lu, Yao and Klimovi\'{c}, Ana and Alonso, Gustavo},
title = {Resource multiplexing in tuning and serving large language models},
year = {2025},
isbn = {978-1-939133-48-9},
publisher = {USENIX Association},
address = {USA},
booktitle = {Proceedings of the 2025 USENIX Conference on Usenix Annual Technical Conference},
articleno = {97},
numpages = {17},
location = {Boston, MA, USA},
series = {USENIX ATC '25}
}

@inproceedings{he2024deferred,
author = {He, Yongjun and Lu, Yao and Alonso, Gustavo},
title = {Deferred Continuous Batching in Resource-Efficient Large Language Model Serving},
year = {2024},
isbn = {9798400705410},
publisher = {Association for Computing Machinery},
address = {New York, NY, USA},
url = {https://doi.org/10.1145/3642970.3655835},
doi = {10.1145/3642970.3655835},
booktitle = {Proceedings of the 4th Workshop on Machine Learning and Systems},
pages = {98–106},
numpages = {9},
location = {Athens, Greece},
series = {EuroMLSys '24}
}

@misc{shen2025flowmesh,
      title={FlowMesh: A Service Fabric for Composable LLM Workflows}, 
      author={Junyi Shen and Noppanat Wadlom and Lingfeng Zhou and Dequan Wang and Xu Miao and Lei Fang and Yao Lu},
      year={2025},
      eprint={2510.26913},
      archivePrefix={arXiv},
      primaryClass={cs.DC},
      url={https://arxiv.org/abs/2510.26913}, 
}
